\documentclass[11pt]{article}%
\usepackage{amsfonts}
\usepackage{amsmath}
\usepackage{amssymb}
\usepackage{graphicx}
\usepackage{fullpage}
\usepackage{dsfont}
\usepackage{hyperref}%

\usepackage{dcolumn}
\usepackage{bm}
\usepackage[utf8]{inputenc}
\usepackage[T1]{fontenc}
\usepackage{mathptmx}
\usepackage{float}

\setcounter{MaxMatrixCols}{30}
%TCIDATA{OutputFilter=latex2.dll}
%TCIDATA{Version=5.50.0.2953}
%TCIDATA{CSTFile=40 LaTeX article.cst}
%TCIDATA{Created=Saturday, April 14, 2018 23:00:39}
%TCIDATA{LastRevised=Thursday, April 19, 2018 00:25:50}
%TCIDATA{<META NAME="GraphicsSave" CONTENT="32">}
%TCIDATA{<META NAME="SaveForMode" CONTENT="1">}
%TCIDATA{BibliographyScheme=Manual}
%TCIDATA{<META NAME="DocumentShell" CONTENT="Standard LaTeX\Standard LaTeX Article">}
%BeginMSIPreambleData
\providecommand{\U}[1]{\protect\rule{.1in}{.1in}}
%EndMSIPreambleData
\newtheorem{theorem}{Theorem}

\newtheorem{lemma}[theorem]{Lemma}

\newtheorem{proposition}[theorem]{Proposition}
\newtheorem{remark}[theorem]{Remark}

\newenvironment{proof}[1][Proof]{\noindent\textbf{#1.} }{\ \rule{0.5em}{0.5em}}
\numberwithin{equation}{section}
\begin{document}

\title{\textbf{Classical capacities of memoryless but not identical quantum channels}}
\author{Samad Khabbazi Oskouei\thanks{Department of Mathematics, Varamin-Pishva Branch, Islamic Azad University, Varamin 33817-7489, Iran}
\and Stefano Mancini\thanks{School of Science and Technology, University of
Camerino, Via M.~delle Carceri 9, I-62032 Camerino, Italy \& INFN--Sezione
Perugia, Via A.~Pascoli, I-06123 Perugia, Italy}
}
\maketitle

\begin{abstract}
We study quantum channels that  vary on time in a deterministic way, that is, they change in an independent but not identical way from one to another use. We derive coding theorems for the classical entanglement assisted and unassisted capacities. We then specialize the theory to lossy bosonic quantum channels and show the existence of contrasting examples where capacities can or cannot be drawn from the limiting behavior of the lossy parameter.
\end{abstract}

\maketitle

\section{Introduction}

Any physical process involves a state change and hence can be regarded as a quantum channel,
i.e. a stochastic map on the set of density operators \cite{Mbook}. As such, it results quite naturally to characterize physical processes in terms of their ability  to transmit information.
Hence, much attention has been devoted to quantum channel capacities.
They were, however, mostly confined to the assumption of channels acting in independent and identically distributed way over inputs. Only recently it has been started to go beyond this assumption \cite{CGLM}.

A paradigm in this direction is provided by compound quantum channels where the map, though being the same over the uses, is initially randomly selected from a given set (with possibly infinite many elements)\cite{BB09}. A more general situation is represented by arbitrarily \textit{varying} quantum channels,
where the sender and receiver must deal with further uncertainty. In fact, in this case, the map is randomly chosen (from a given set) at each use.
Concerning this latter, wider class of quantum channels, the classical capacity was derived in \cite{AB07} under the assumption of classical-quantum channels and then extended to the fully quantum channels in \cite{BDW18}. Instead, the entanglement-assisted classical capacity has been derived in \cite{BHK16}.
In particular, in Ref.\cite{AB07} it was proved that the average error classical capacity of a classical-quantum arbitrarily varying channel equals zero or else the random code capacity. Conditions for the latter case were found by using the elimination (or de-randomization) technique. Then, Ref.\cite{BDW18} showed how the random code capacity of a finite-dimensional quantum arbitrarily varying channel can be reduced to the capacity of a naturally associated compound channel by using permutation symmetry and de Finetti reduction.
The technique used in Ref.\cite{BHK16} to prove a coding theorem for the entanglement-assisted classical capacity relies on the arguments used in Ref.\cite{ABBJ13} for finite-dimensional arbitrarily varying quantum channels. That is, capacity-achieving codes for general compound quantum channels were used. Next a variation of the so-called robustification and elimination techniques was borrowed from \cite{JOPS2012} as a method to extend the coding theorem to arbitrarily varying quantum channels.

Here we want to consider channels that are \textit{varying} from one use to another, not in an arbitrary (random) way, but rather in a deterministic way. As such they cannot however be obtained as a particular case of arbitrarily varying quantum channels, because even if we concentrate the probability measure used therein to one item of the channels' set, we recover the independent and identical distributed model. Instead, we would deal with independent but not identical channels and more specifically with classical information transmission through them.
We shall derive the classical assisted and unassisted capacities formulae in a different way with respect to Refs.\cite{AB07,BDW18,BHK16}. Namely, we shall employ position-based encoding \cite{QWW17,anshu2017one} and sequential decoding \cite{Wilde20130259}, which rely on quantum hypothesis testing and Berry-Esseen theorem [tools discussed in \cite{BD10,KW17a,li12,TH12} for finite-dimensional Hilbert spaces and in \cite{OSM2018} for infinite-dimensional ones]. As a
by-product, we obtain the validity of formulae also in the case of infinite-dimensional spaces. We then specialize the theory in this context by considering lossy bosonic quantum channels.
The study of deterministic time-varying quantum channels is motivated by the fact that a determinist description is often taken for linear time-varying channels in wireless communication (see e.g. \cite{Matz2011}), which is almost unexplored at the quantum level. Additionally, there are practical situations, of increasing interest for quantum communication, showing deterministically time-varying channels. One of these is provided by data transmission from a low-orbit satellite to a geostationary satellite or ground station. In such scenarios, the received signal power increases as the transmitting low-orbit satellite comes into view, and then decreases as it then departs, resulting in a communication link whose time variation is known to the sender and receiver (see e.g. \cite{Ryan1995}).

The paper is organized as  follows. Preliminary notions, starting from smooth quantum relative entropy, are introduced in Section \ref{sec:prel}. In the core part of the paper, we will derive coding theorems for the classical entanglement assisted and unassisted capacities (Sections \ref{sec:EACC} and \ref{sec:UCC} respectively). We then specialize the theory to lossy bosonic quantum channels (Section \ref{sec:Gauss}) and show the existence of contrasting examples where capacities can or cannot be drawn from the limiting behavior of the lossy parameter (Section \ref{sec:EX}).
Section \ref{sec:conclu} is for conclusions and Appendix \ref{ap:second}  contains
details on the deviation of smooth relative entropy
from standard relative entropy.

%%%%%%%%%%%%%%%%%%%%%%%%%%%%%%%%%%%%%%%%%%%%%%%%%%%

\section{Preliminaries}\label{sec:prel}

Let $\mathcal{H}$ be a separable Hilbert space, and let  $\mathcal{T}(\mathcal{H})$ be a set of
trace class linear operators acting on $\mathcal{H}$.
A quantum channel $\mathcal{N}_{A\rightarrow B}$ is a completely positive trace preserving (CPTP) linear map from $\mathcal{T}(\mathcal{H}_A)$ to $\mathcal{T}(\mathcal{H}_B)$.

Suppose that we have an infinite sequence $\mathfrak{N}=\{{\cal N}_k^{A\to B}\}_k$ of quantum channels, known to both the sender
and receiver before communication begins,  whence referred to as deterministic. Here we want to address the issue of what are the classical capacities (entangled assisted and unassisted) of such a sequence of
channels.
 To this end we cannot resort to the standard asymptotic theory that is valid for independent and identical channels.
Rather, what we will do is to consider the one-shot capacity (see e.g. \cite{ BD10,WR12, anshu2017one,  DH13, MW14}) for the first $n$-items of the channels sequence and then let $n$ goes to infinity. In other words, $n$ items of the sequence are viewed as a single (larger) channel which will be used only once and the number of bits that can be transmitted through it with a given average error probability will be found. In doing so we need to introduce tools like smooth quantum relative entropy\cite{BD10,WR12}.

\bigskip

A density operator on  $\mathcal{H}$ is a positive linear operator with trace equal to one.
Let us consider two density operators $\rho$ and $\sigma$ and
assume that their spectral decompositions are given by
\begin{equation}
\rho=\sum_{x\in\mathcal{X}} \lambda_x P_x \quad \text{and} \quad \sigma=\sum_{y\in\mathcal{Y}} \mu_y Q_y\,,
\end{equation}
where $\mathcal{X}$ and $\mathcal{Y}$ are countable index sets, $\{\lambda_x\}_{x\in\mathcal{X}}$ and $\{\mu_y\}_{y\in\mathcal{Y}}$ are probability distributions with $\sum_{x\in\mathcal{X}} \lambda_x=\sum_{y\in \mathcal{Y}}\mu_y = 1$, and $P_x, Q_y$ are projections such that $\sum_{x\in\mathcal{X}} P_x=\sum_{y\in \mathcal{Y}}Q_y=I$.

Given a Positive Operator Valued Measure
(POVM) with two elements, $\Pi$ and $I -\Pi$,
aimed at distinguishing $\rho$ from $\sigma$,
we consider a smoothed version of quantum relative entropy defined as
the negative logarithm of the minimum probability that
the `test' $\Pi$ will fail on
state $\sigma$, under the constraint that its failure probability
on state $\rho$ is not larger than $\varepsilon\in(0, 1)$, that is \cite{BD10,WR12}
 \begin{equation}
   D_H^\varepsilon(\rho \Vert \sigma)\equiv\sup_{0\leq \Pi \leq I,\, \operatorname{Tr}(\Pi \rho)\geq 1-\varepsilon} \left[
  -\log \operatorname{Tr}(\Pi \sigma) \right]\,.
 \end{equation}
Throughout this paper $\log$ stands for $\log_2$.

The quantum relative entropy \cite{L73}, its variance \cite{TH12, li12, KW17a} and the $T$ quantity
are respectively defined as \cite{TH12, li12, KW17a}:
\begin{equation}
D(\rho\Vert\sigma)\equiv\sum_{x\in\mathcal{X},y\in\mathcal{Y}}\lambda_x \operatorname{Tr}(P_x Q_y)\log\left(\frac{\lambda_x}{\mu_y}\right)\,,
\end{equation}
\begin{equation}
V(\rho\Vert\sigma)\equiv\sum_{x\in\mathcal{X},y\in\mathcal{Y}}\lambda_x \operatorname{Tr}(P_x Q_y)\left(  \log
\left(\frac{\lambda_x}{\mu_y}\right)-D(\rho\Vert\sigma)\right)  ^{2}\,,
\end{equation}
\begin{equation}\label{eq:Tdef}
T(\rho\Vert\sigma)\equiv\sum_{x\in\mathcal{X},y\in\mathcal{Y}}\lambda_x \operatorname{Tr}(P_x Q_y) \left\vert\log\left(\frac{\lambda_x}{\mu_y}\right)-D(\rho\Vert\sigma)\right\vert ^{3}\,.
\end{equation}
For given density operators $\rho$ and $\sigma$ satisfying
\begin{equation}
D(\rho\Vert \sigma), V(\rho\vert\sigma)\,\, \text{and}\,\, T(\rho\Vert\sigma) < \infty\,,
\end{equation}
we have the following expansion
\begin{equation}\label{re:7}
D_H^\varepsilon\left( \rho^{\otimes n}\Vert \sigma^{\otimes n}  \right)=n D(\rho\Vert\sigma)+\sqrt{n V(\rho\Vert\sigma)}\Phi^{-1}(\varepsilon)+O(\log n)\,,
\end{equation}
where
\begin{equation}
\Phi(a)\equiv\frac{1}{\sqrt{2\pi}}\int_{-\infty}^a  \operatorname{exp}\left(-\frac{x^2}{2}\right)dx,
\qquad
\Phi^{-1}(\varepsilon)=\sup\{ a\in \mathds{R} \vert \Phi(a)<\varepsilon\}\,.
\end{equation}
Relation ~\eqref{re:7} was proven for finite dimensional Hilbert spaces in ~\cite{TH12, li12}.
 For infinite-dimensional separable Hilbert spaces, the inequality $\leq$ was proven in \cite{DPR15,KW17a}, while the inequality $\geq$ was shown in \cite{OSM2018}.
 In Appendix \ref{ap:second}, we generalize it as follows
\begin{align}\label{newDDrel}
D_{H}^{\varepsilon}\left( \bigotimes_{i=1}^n\rho_i \Bigg\Vert\bigotimes_{i=1}^n\sigma_i\right)  =\sum_{i=1}^n D(\rho_i\Vert\sigma_i)+\sqrt{\sum_{i=1}^n V(\rho_i\Vert\sigma_i)}\,\Phi^{-1}\!\left(
\varepsilon\right)  +O(\log n)\,,
\end{align}
where $\rho_i$'s and $\sigma_i$'s are density operators acting on $\mathcal{H}$ with the additional condition
 \begin{equation}
  \lim_{n\to \infty}\frac{6 \sum_{i=1}^n \left(T(\rho_i \| \sigma_i)\right)}{\sqrt{\left(\sum_{i=1}^n V\left(\rho_i \Vert \sigma_i\right)\right)^3}}=0\, .
\end{equation}

%%%%%%%%%%%%%%%%%%%%%%%%%%%%%%%%%%%%%%%%%%%%%%%%%%%

\section{Entanglement assisted classical capacity}\label{sec:EACC}

In this Section we present the coding theorem for the entanglement assisted classical capacity
of a deterministic sequence of independent channels $\mathfrak{N}=\{{\cal N}_k^{A\to B}\}_k$.

Suppose that  channels $\mathfrak{N}$ connect a sender
Alice to a receiver Bob
and they can share an arbitrary quantum state $\rho_{R^nA^n}$ before using the first $n$
items of $\mathfrak{N}$.
Here the system $R$ (resp. $A$) is considered as accessible to Bob (resp.) Alice.
For positive integers $n$
and $M$, and $\varepsilon\in\left[  0,1\right]  $, an $(n,M,\varepsilon)$ code
for entanglement-assisted classical communication consists of the resource
state $\rho_{R^nA^n}$ and a set $\{\mathcal{E}_{A^n\rightarrow A^{n}}^{m}\}_{m\in\mathcal{M}}$ of encoding channels, where
$\left\vert \mathcal{M}\right\vert =M$. It also consists of a decoding POVM
$\left\{  \Lambda_{R^{n}B^n}^{m}\right\}  _{m\in\mathcal{M}}$
satisfying the following condition:
\begin{equation}
\frac{1}{M}\sum_{m\in\mathcal{M}}\operatorname{Tr}\{\left(  I_{R^{n}B^n}-\Lambda_{R^{n}B^{n}}^{m}\right)  \otimes_{k=1}^n\mathcal{N}_k^{A\to B}(\mathcal{E}_{A^{n}\rightarrow A^{n}}^{m}
(\rho_{R^{n}A^{n}}))\}\leq\varepsilon\, ,
\end{equation}
which we interpret as saying that the average error probability is no larger
than $\varepsilon$, when using the entanglement-assisted code described above.

The entanglement-assisted classical capacity of the first $n$ items of $\mathfrak{N}$, denoted by $C_{{E}}(\mathfrak{N}%
,n,\varepsilon)$, is equal to the largest value of $\frac
{1}{n}\log M$ (bits per channel use) for which there exists an $\left(
n,M,\varepsilon\right)  $ entanglement-assisted code as described above.
The entanglement assisted classical capacity for $\mathfrak{N}$ is defined by
\begin{equation}
  C_{{E}}(\mathfrak{N})\equiv\lim_{\varepsilon\to 0}\lim_{n\to\infty} C_{{E}}(\mathfrak{N},n,\varepsilon)\,.
\end{equation}

\begin{theorem}\label{CEAcoding}
Given a deterministic sequence of independent channels $\mathfrak{N}=\{{\cal N}_k^{A\to B}\}_k$, the entanglement assisted classical capacity results
\begin{equation}\label{eq:CEN}
{C_{E}}(\mathfrak{N})=\lim_{n\to\infty}\frac{1}{n}\left[\max_{\rho_{RA}}
\sum_{k=1}^n D\left({\cal N}_k^{A\to B}(\rho_{RA})
\Big\|
\rho_R\otimes {\cal N}_k^{A\to B}(\rho_{A})\right) \right],
\end{equation}
where $\rho_{RA}$ is a resource entangled state shared by Alice and Bob.
\end{theorem}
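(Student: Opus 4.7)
The plan is to prove the theorem via matching achievability and converse bounds on the one-shot $\varepsilon$-error capacity of the first $n$ channels, and then push $n\to\infty$ using the multi-letter expansion~\eqref{newDDrel} of the smooth hypothesis-testing relative entropy. Concretely, for fixed $n$ and $\varepsilon$ I will sandwich
$$\log M \;\approx\; D_H^\varepsilon\!\left(\bigotimes_{k=1}^n \mathcal{N}_k^{A\to B}(\rho_{R_kA_k}) \;\Big\|\; \bigotimes_{k=1}^n \rho_{R_k}\otimes \mathcal{N}_k^{A\to B}(\rho_{A_k})\right),$$
modulo lower-order $O(\log n)$ and $O(\log(1/\varepsilon))$ corrections, and then invoke~\eqref{newDDrel} to identify the leading term with the claimed formula.

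\textbf{Achievability.} I would use position-based encoding in the spirit of \cite{QWW17,anshu2017one}, adapted to the non-identical setting. For each message $m\in\{1,\dots,M\}$ reserve one ``slot'' containing an $n$-fold product resource $\bigotimes_{k=1}^n \rho_{R_kA_k}$, where the optimizing single-letter state $\rho_{RA}$ is used in each slot, so that the total shared resource is $(\rho_{R^nA^n})^{\otimes M}$. To send $m$, Alice pushes the $A$-subsystems of the $m$-th slot through $\mathcal{N}_1,\dots,\mathcal{N}_n$ in order and traces out the others. Bob, holding all $M$ copies of $R^n$ together with the received $B^n$, runs a sequential decoder built from the optimal two-outcome test $\Pi^*$ that achieves the $D_H^\varepsilon$ on the right-hand side of the display above. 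A standard union-bound analysis together with the gentle-measurement/Hayashi-Nagaoka lemma (as in \cite{Wilde20130259}) shows that the average error probability is at most $\varepsilon + M\cdot 2^{-D_H^\varepsilon}$, so any rate with $\log M$ strictly below $D_H^\varepsilon$ is achievable.

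\textbf{Converse.} Given any $(n,M,\varepsilon)$ code $(\rho_{R^nA^n},\{\mathcal{E}^m\},\{\Lambda^m\})$, the average-error constraint produces a two-outcome test that distinguishes the true encoded state from a reference product state obtained by replacing the shared $R^n$ system by its marginal. Applying the data-processing inequality for $D_H^\varepsilon$ and the usual meta-converse manipulations yields
$$\log M \;\leq\; D_H^\varepsilon\!\left(\bigotimes_{k=1}^n \mathcal{N}_k(\rho_{R_kA_k}) \,\Big\|\, \bigotimes_{k=1}^n \rho_{R_k}\otimes \mathcal{N}_k(\rho_{A_k})\right)+O(1),$$
where $\rho_{R_kA_k}$ is the $k$-th marginal of the (message-averaged) coded state; since the RHS is maximized by product resources, this reduces to maximizing over a single $\rho_{RA}$ shared by each channel use, matching the achievability expression.

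\textbf{Passing to the limit and the main obstacle.} Plugging~\eqref{newDDrel} into both bounds gives
$$\frac{1}{n}\log M \;=\; \frac{1}{n}\sum_{k=1}^n D\!\left(\mathcal{N}_k(\rho_{RA})\big\|\rho_R\otimes\mathcal{N}_k(\rho_A)\right) + \frac{\Phi^{-1}(\varepsilon)}{\sqrt{n}}\sqrt{\tfrac{1}{n}\!\sum_k V_k} + O\!\left(\tfrac{\log n}{n}\right),$$
so that sending $n\to\infty$ kills the $1/\sqrt{n}$ and $\log n/n$ corrections and then $\varepsilon\to 0$ removes the Gaussian-tail term, delivering \eqref{eq:CEN}. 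The main obstacle I expect is verifying the Lyapunov-type hypothesis on the $T$ and $V$ quantities of the output states $\mathcal{N}_k(\rho_{RA})$ versus $\rho_R\otimes\mathcal{N}_k(\rho_A)$ along the sequence, without which the expansion~\eqref{newDDrel} does not apply uniformly; this is particularly delicate when $\mathcal{H}$ is infinite-dimensional, forcing a careful choice of the resource $\rho_{RA}$ (e.g.\ finite-rank or of sufficiently light spectral tail) so that $D$, $V$ and $T$ remain simultaneously finite for every $k$. A secondary point is to show that the position-based/sequential-decoding construction extends to separable Hilbert spaces, which one handles by finite-rank truncation of $\Pi^*$ together with the lower semicontinuity of $D_H^\varepsilon$ established in~\cite{OSM2018}.
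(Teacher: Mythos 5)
Your proposal is correct and, for the achievability half, follows essentially the same route as the paper: position-based encoding with a product resource per message slot, sequential decoding controlled by the quantum union bound of \cite{OSM2018}, a one-shot rate of the form $D_H^{\varepsilon-\eta}(\cdot\Vert\cdot)-\log(4\varepsilon/\eta^2)$, and then the non-i.i.d.\ second-order expansion \eqref{newDDrel} (the paper takes $\eta=1/\sqrt{n}$, which only shifts the argument of $\Phi^{-1}$ by a vanishing amount relative to your version). You also correctly isolate the Lyapunov-type condition on $\sum_k T_k/\bigl(\sum_k V_k\bigr)^{3/2}$ as the hypothesis needed for \eqref{newDDrel} to apply; the paper records exactly this as Eq.~\eqref{re:Berry-condition}. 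Where you genuinely diverge is the converse: you sketch a one-shot meta-converse (a two-outcome test extracted from the code, followed by data processing of $D_H^\varepsilon$ against the product reference state), whereas the paper simply asserts the upper bound \eqref{eq:uppB} in one line by appealing to the channels being ``used in parallel,'' i.e.\ to the additivity of the entanglement-assisted mutual information over tensor products. Your route is more self-contained, but note that it leaves implicit the same step the paper does: the collapse of the optimization over a general $\rho_{R^nA^n}$ to a product resource, and in fact to a single common $\rho_{RA}$ as written in \eqref{eq:CEN}, rests on that additivity property and on the structure of the optimizer, and deserves an explicit argument or citation either way. Net: both halves are sound at the paper's level of rigor, and your converse is, if anything, more explicit than the published one.
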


To prove the Theorem, we will resort to position-based encoding and sequential decoding strategy. In other words, Alice and Bob are supposed to share $M$ resource entangled states $\rho_{R_i A_i}$, $i=1, \cdots, M$ where Bob has $R$ systems and Alice has $A$ systems. If Alice wants to transmit the message $m$ through the channel $\mathcal{N}^{A\to B}$, simply selects the $m$'s state in her systems and sends it through the channel so that the marginal state of Bob systems is as follows
\begin{equation}
\rho_{R_{1}}\otimes\cdots\otimes\rho_{R_{m-1}}\otimes
\mathcal{N}^{A_{m}\rightarrow B} (\rho_{R_{m}A_{m}})\otimes\rho_{R_{m+1}}\otimes
\cdots\otimes\rho_{R_{M}}\,.
\end{equation}
Bob then to determine which message Alice
transmitted, introduces $M$ auxiliary probe systems in the state
$|0\rangle\langle0|$, so that his overall state is
\begin{equation}
\omega_{R^{M}B P^{M}}^{m}\equiv\rho_{R_{1}}\otimes\cdots\otimes\rho_{R_{m-1}}\otimes\mathcal{N}^{A_{m}\rightarrow B} (\rho_{R_{m}A_{m}})\otimes
\rho_{R_{m+1}}\otimes\cdots\otimes\rho_{R_{M}}\otimes|0\rangle\langle
0|_{P_{1}}\otimes\cdots\otimes|0\rangle\langle0|_{P_{M}}\,.
\end{equation}
He next performs the binary measurements
 $\{\Pi_{R_{i}B_m P_{i}},\hat{\Pi}_{R_{i}B_m P_{i}}
\equiv I_{R_{i}B_m P_{i}}-\Pi_{R_{i}B_m P_{i}}\}$,
sequentially, in the order $i=1$, $i=2$, etc. With this
strategy, the probability that he decodes the $m$th
message correctly is given by
\begin{equation}
\operatorname{Tr}\{\Pi_{R_{m}B P_{m}}\hat{\Pi}_{R_{m-1}B P_{m-1}}\cdots\hat{\Pi
}_{R_{1}B P_{1}}\omega_{R^{M}B P^{M}}^{m}\hat{\Pi}_{R_{1}B P_{1}}\cdots\hat{\Pi
}_{R_{m-1}B P_{m-1}}\}\,.
\end{equation}
Applying the ``quantum union bound" \cite{OSM2018}, we can bound the complementary
probability (error probability)
\begin{align}
p_{\text{e}}(m) &  \equiv1-\operatorname{Tr}\{\Pi_{R_{m}BP_{m}}\hat{\Pi
}_{R_{m-1}BP_{m-1}}\cdots\hat{\Pi}_{R_{1}BP_{1}}\omega_{R^{M}BP^{M}}^{m}%
\hat{\Pi}_{R_{1}BP_{1}}\cdots\hat{\Pi}_{R_{m-1}BP_{m-1}}\}\,.
\end{align}
More specifically, by \cite[Theorem 5.1]{OSM2018} $p_{\text{e}}(m)\leq \varepsilon$ holds  for all $m$, when
\begin{equation}\label{pebound}
\log M=D_{H}^{\varepsilon-\eta}\left(\mathcal{N}(\rho_{RB})\Vert \rho_R\otimes\mathcal{N}(\rho_A) \right)-\log(4\varepsilon/\eta^{2}),
\end{equation}
where $\eta\in(0, \varepsilon)$ and $\varepsilon\in(0,1)$.
Now, we are going to use this relation to provide a lower bound on the position-based encoding and sequential decoding for the entangled assisted classical capacity of
the channel sequence $\{\mathcal{N}_k^{A\to B}\}_k$.

\begin{lemma}\label{CEALB}
A message $m\in{\cal M}$ can be sent through the channels $\bigotimes_{k=1}^n \mathcal{N}_k$
with $p_{\text{e}}(m)\leq \varepsilon$ by choosing
\begin{align}\label{cor:ea}
  \log M &=\sum_{k=1}^n D\left({\cal N}^{A\to B}_k(\rho_{R A}) \Big\Vert \rho_{R} \otimes
{\cal N}^{A\to B}_k(\rho_{A})\right) \notag\\
&+\sqrt{\sum_{k=1}^nV\left({\cal N}^{A\to B}_k(\rho_{R A}) \big\|
\rho_{R} \otimes {\cal N}^{A\to B}_k(\rho_{A})\right)}\Phi^{-1}\left(\varepsilon-\frac{1}{\sqrt{n}}\right)
+ O(\log n)\,.
\end{align}
\end{lemma}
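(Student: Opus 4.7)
The plan is to combine the one-shot achievability bound \eqref{pebound} (derived from the quantum union bound of \cite{OSM2018}) with the asymptotic expansion \eqref{newDDrel} for product-but-non-identical states, and then to optimize the free smoothing parameter $\eta$.

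First, I would lift the position-based/sequential decoding scheme from the single-channel discussion to the composite channel $\bigotimes_{k=1}^{n}\mathcal{N}_k^{A\to B}$. Alice and Bob share $M$ copies of the $n$-fold resource state $\rho_{R^{n}A^{n}}=\rho_{RA}^{\otimes n}$, and Alice transmits message $m$ by sending the $A^{n}$-registers of the $m$-th copy through the $n$-fold channel. The resulting code is of exactly the form to which \eqref{pebound} applies, once we view $\bigotimes_{k=1}^{n}\mathcal{N}_k$ as the single ``big'' channel $\mathcal{N}$ and $\rho_{RA}^{\otimes n}$ as the single ``big'' resource state. Hence, for any $\eta\in(0,\varepsilon)$, the rate
\[
\log M \;=\; D_{H}^{\varepsilon-\eta}\!\left(\bigotimes_{k=1}^{n}\mathcal{N}_k^{A\to B}(\rho_{RA})\,\Big\Vert\,\bigotimes_{k=1}^{n}\rho_{R}\otimes\mathcal{N}_k^{A\to B}(\rho_{A})\right) - \log(4\varepsilon/\eta^{2})
\]
is achievable with average error at most $\varepsilon$.

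Next, I would apply the generalized second-order expansion \eqref{newDDrel} with the identifications $\rho_i=\mathcal{N}_i^{A\to B}(\rho_{RA})$ and $\sigma_i=\rho_R\otimes\mathcal{N}_i^{A\to B}(\rho_A)$. This replaces the smooth hypothesis-testing entropy by
\[
\sum_{k=1}^{n} D\!\left(\mathcal{N}_k(\rho_{RA})\Vert \rho_R\otimes \mathcal{N}_k(\rho_A)\right)+\sqrt{\sum_{k=1}^{n} V\!\left(\mathcal{N}_k(\rho_{RA})\Vert \rho_R\otimes \mathcal{N}_k(\rho_A)\right)}\,\Phi^{-1}(\varepsilon-\eta)+O(\log n).
\]
Finally, I would choose $\eta=1/\sqrt{n}$: then the correction $\log(4\varepsilon/\eta^{2})=\log(4\varepsilon n)$ is absorbed into the $O(\log n)$ term, and $\Phi^{-1}(\varepsilon-\eta)$ becomes $\Phi^{-1}(\varepsilon-1/\sqrt{n})$, yielding exactly \eqref{cor:ea}.

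The main obstacle is conceptual rather than computational: one must check that the hypotheses for invoking \eqref{newDDrel} actually hold here, i.e.\ that the third-absolute-moment ratio $6\sum_{k=1}^{n} T(\mathcal{N}_k(\rho_{RA})\Vert \rho_R\otimes\mathcal{N}_k(\rho_A))/\bigl(\sum_{k=1}^{n} V(\mathcal{N}_k(\rho_{RA})\Vert \rho_R\otimes\mathcal{N}_k(\rho_A))\bigr)^{3/2}$ tends to zero. This is the Berry--Esseen-type regularity condition on the deterministic sequence $\mathfrak{N}$ and must be either assumed or verified for the channel families of interest (in particular, for the lossy bosonic channels treated in Section \ref{sec:Gauss}); beyond this check, the argument is a direct substitution of \eqref{newDDrel} into \eqref{pebound} with $\eta$ chosen to balance the $\log(1/\eta)$ penalty against the desired $O(\log n)$ remainder.
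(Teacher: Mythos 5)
Your proposal is correct and follows essentially the same route as the paper: substitute the $n$-fold product channel into the one-shot bound \eqref{pebound}, apply the generalized second-order expansion \eqref{newDDrel} to the resulting smooth hypothesis-testing relative entropy, and set $\eta=1/\sqrt{n}$ so that $\log(4\varepsilon/\eta^{2})$ is absorbed into the $O(\log n)$ term. Your explicit flagging of the Berry--Esseen regularity condition matches the paper's condition \eqref{re:Berry-condition}, which it likewise states as a hypothesis accompanying the result rather than verifying it in the lemma itself.
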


\begin{proof}
By replacing $\mathcal{N}^{A\to B}$ with $\bigotimes_{k=1}^n\mathcal{N}_k^{A\to B}$ and
letting $\Lambda_{R_i^n B^n}^n$ be a measurement operator such that
\begin{equation}
  \Lambda_{R_i^n B^n}^n= \operatorname{argmax}_\Lambda \left( D_H^\varepsilon \left( \bigotimes_{k=1}^n \mathcal{N}^{A_i\rightarrow B}_k(\rho_{R_i A_i})
  \Big\Vert
  \bigotimes_{k=1}^n \rho_{R_i} \otimes
  \mathcal{N}^{A_i\rightarrow B}_k(\rho_{A_i})   \right)  \right)\,,
\end{equation}
we can get from \eqref{pebound}
\begin{equation}\label{re:19}
   \log{M}= {D_H^{\varepsilon-\eta}\left(\bigotimes_{k=1}^n \mathcal{N}^{A\rightarrow B}_k(\rho_{R A}) \Big\Vert \bigotimes_{k=1}^n \rho_{R}\otimes  \mathcal{N}^{A\rightarrow B}_k(\rho_{A})\right)}-\log\left( \frac{4\varepsilon}{\eta^2} \right)\,.
\end{equation}
Next, setting $\eta=1/\sqrt{n}$, with the second order asymptotic relation \eqref{newDDrel}
we arrive at
\begin{align}\label{{re:22}}
\log M &= \sum_{k=1}^n D\left({\cal N}^{A\to B}_k(\rho_{R A}) \big\Vert \rho_{R} \otimes
{\cal N}^{A\to B}_k(\rho_{A})\right) +\nonumber \\
&+\sqrt{\sum_{k=1}^nV\left({\cal N}^{A\to B}_k(\rho_{R A}) \| \rho_{R} \otimes
{\cal N}^{A\to B}_k(\rho_{A})\right)}\Phi^{-1}\left(\varepsilon-\frac{1}{\sqrt{n}}\right) + O(\log n)\,,
\end{align}
with the condition
\begin{equation}\label{re:Berry-condition}
  \lim_{n\to \infty}\frac{6 \sum_{k=1}^n \left[T\left({\cal N}^{A\to B}_k(\rho_{R A}) \Vert \rho_{R} \otimes {\cal N}^{A\to B}_k(\rho_{A})\right)\right]}{\sqrt{\left[\sum_{k=1}^nV\left(
  \mathcal{N}^{A\rightarrow B}_k(\rho_{R A}) \| \rho_{R} \otimes
  \mathcal{N}^{A\rightarrow B}_k(\rho_{A})\right)\right]^3}}=0\, .
\end{equation}
\hfill
\end{proof}

\begin{proof}\textbf{Theorem \ref{CEAcoding}.}
The direct part is based on the result of Lemma \ref{CEALB} which provides a lower bound
for the capacity, namely
\begin{align}\label{eq:lowB}
 {n\,C_{E}}(\mathfrak{N}, n, \varepsilon)&\geq \max_{\rho_{RA}}
\sum_{k=1}^n D\left({\cal N}^{A\to B}_k(\rho_{R A}) \big\Vert \rho_{R} \otimes
{\cal N}^{A\to B}_k(\rho_{A})\right) \notag\\
&+\sqrt{\sum_{k=1}^nV\left({\cal N}^{A\to B}_k(\rho_{R A}) \| \rho_{R} \otimes
{\cal N}^{A\to B}_k(\rho_{A})\right)}\Phi^{-1}\left(\varepsilon-\frac{1}{\sqrt{n}}\right) + O(\log n)\,.
\end{align}
For the converse part, since the channels are independent, though not identical, it is like to
have them used in parallel, hence
\begin{equation}\label{eq:uppB}
 {n\,C_{E}}(\mathfrak{N}, n, \varepsilon)\leq \max_{\rho_{RA}}
\sum_{k=1}^n D\left({\cal N}_k^{A\to B}(\rho_{RA})
\big\|\rho_R\otimes {\cal N}_k^{A\to B}(\rho_{A})\right).
\end{equation}
\hfill
\end{proof}

\begin{remark}
From Eqs.\eqref{eq:lowB} and \eqref{eq:uppB} it is evident that, by taking the limit $n\to\infty$,
the dependence on $\epsilon$ disappears. Hence the subsequent limit $\epsilon\to 0$ results superfluous.
\end{remark}

\begin{remark}
For arbitrarily varying quantum channel, Ref. \cite{BHK16}, the encoding depends on the dimension of input Hilbert space and so the capacity formula cannot work in the infinite dimensional case, while here we do not have such a restriction.
\end{remark}

%%%%%%%%%%%%%%%%%%%%%%%%%%%%%%%%%%%%%%%%%%%%%%%%%%%

\section{Unassisted classical capacity}\label{sec:UCC}

This Section is devoted to the coding theorem for the unassisted classical capacity of
a deterministic sequence of independent channel $\mathfrak{N}=\{{\cal N}_k^{A\to B}\}_k$.

Suppose that channels
$\mathfrak{N}$ connect a sender Alice to a receiver Bob. For
positive integers $n$ and $M$, and $\varepsilon\in\left[  0,1\right]  $, an
$(n,M,\varepsilon)$ code for classical communication consists of a set
$\{\rho_{A^{n}}^{m}\}$ of separable quantum states across systems $A_k$, which are called
quantum codewords, and where $\left\vert \mathcal{M}\right\vert =M$. It also
consists of a decoding POVM $\left\{  \Lambda_{B^{n}}^{m}\right\}
_{m\in\mathcal{M}}$ satisfying the following condition:
\begin{equation}\label{eq:avepe}
\frac{1}{M}\sum_{m\in\mathcal{M}}\operatorname{Tr}\{\left(  I_{B^{n}}
-\Lambda_{B^{n}}^{m}\right)   \otimes_{k=1}^n\mathcal{N}_k^{A\to B}
(\rho_{A^{n}}^{m})\}\leq\varepsilon\, ,
\end{equation}
which we interpret as saying that the average error probability is no larger
than $\varepsilon$, when using the quantum codewords and decoding POVM
described above.

The unassisted classical capacity with separable inputs of the first $n$ items of $\mathfrak{N}$, denoted by $C(\mathfrak{N},n,\varepsilon)$,
is equal to the largest value of $\frac{1}{n}\log M$ (bits per channel use)
for which there exists an $\left(  n,M,\varepsilon\right)  $ code as described above.
The unassisted classical capacity for $\mathfrak{N}$ is defined by
\begin{equation}
  C(\mathfrak{N})\equiv\lim_{\varepsilon\to 0}\lim_{n\to\infty} C(\mathfrak{N},n,\varepsilon)\, .
\end{equation}

 \begin{theorem}\label{Ccoding}
Given a deterministic sequence of independent channel $\mathfrak{N}=\{{\cal N}_k^{A\to B}\}_k$, the unassisted classical capacity  with separable inputs results
\begin{equation}\label{classcap}
C(\mathfrak{N})=\lim_{n\to\infty}\frac{1}{n}\left[\max_{\rho_{X^nA^n}}
\sum_{k=1}^n D\left({\cal N}_k^{A\to B}(\rho_{XA_k})\Big\|\rho_X\otimes {\cal N}_k^{A\to B}(\rho_{A_k})\right) \right]\, ,
\end{equation}
where
\begin{equation}
\rho_{X^nA^n}=\sum_{x^n\in{\cal X}^n}p(x^n)\, |x^n\rangle\langle x^n|
\otimes \rho_{A^n}^{x^n}
\end{equation}
is a classical-quantum state, with $|x^n\rangle\in{\cal H}_X^{\otimes n}$ orthonormal states
(being ${\cal H}_X$ a separable Hilbert space) and
$\rho_{A^n}^{x^n}=\left(\rho_{A_1}^{x_1} \otimes \ldots \otimes \rho^{x_n}_{A_n}\right)$.
\end{theorem}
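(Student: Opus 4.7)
The plan is to follow the two-part (direct/converse) template of Theorem \ref{CEAcoding}, replacing the entangled resource $\rho_{RA}$ by a classical--quantum state $\rho_{XA}=\sum_{x}p(x)|x\rangle\langle x|\otimes\rho_{A}^{x}$ and then exploiting the fact that a classical $X$-register can be shared without genuine entanglement. First I would establish a one-shot lemma completely analogous to Lemma \ref{CEALB}: under the substitutions $\rho_{RA}\to\rho_{X_{k}A_{k}}$ and $\rho_{R}\to\rho_{X_{k}}$ in every term of \eqref{cor:ea}, the same inequality remains an achievable lower bound on $\log M$ for average error $\leq\varepsilon$.

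For this lemma I would run the identical position-based encoding and sequential-decoding construction of Section \ref{sec:EACC}, using $M$ copies of the classical--quantum resource $\rho_{X_{k}A_{k}}$ at each slot $k$. Because $X$ is classical, the $M$ copies are instantiated as follows: Alice privately samples labels $x^{n}(m)\sim p(x^{n})$ for $m=1,\dots,M$, forms the separable codeword $\rho_{A^{n}}^{m}=\bigotimes_{k}\rho_{A_{k}}^{x_{k}(m)}$, and Bob holds the same labels as classical side information (shared randomness), which a standard expurgation/de-randomization step then removes at the cost of a constant factor in the error (absorbed in $O(\log n)$). The sequential quantum union bound of \cite[Theorem 5.1]{OSM2018} applied to the tests that distinguish $\mathcal{N}_{k}(\rho_{X_{k}A_{k}})$ from $\rho_{X_{k}}\otimes\mathcal{N}_{k}(\rho_{A_{k}})$, combined with the generalized second-order expansion \eqref{newDDrel} (with $\eta=1/\sqrt{n}$), gives the desired bound exactly as in the entanglement-assisted case. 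Crucially, the codewords that survive are product, hence separable, across the $A_{k}$'s, meeting the input constraint of \eqref{eq:avepe}.

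The direct half of Theorem \ref{Ccoding} then follows by maximising the resulting bound over classical--quantum states $\rho_{X^{n}A^{n}}$ of the stated form (permitting non-product $p(x^{n})$ so as to capture classical correlations across uses). The converse uses that the channels are independent but not identical, so $\bigotimes_{k}\mathcal{N}_{k}$ acts as a genuine product channel; the standard one-shot converse for classical communication with separable inputs (cf.\ \cite{WR12,MW14}) applied to this combined channel then yields
\begin{equation*}
n\,C(\mathfrak{N},n,\varepsilon)\leq\max_{\rho_{X^{n}A^{n}}}\sum_{k=1}^{n}D\!\left(\mathcal{N}_{k}(\rho_{XA_{k}})\big\Vert\rho_{X}\otimes\mathcal{N}_{k}(\rho_{A_{k}})\right).
\end{equation*}
Passing to the limit $n\to\infty$ kills the sub-leading $O(\sqrt{n})$ and $O(\log n)$ terms and the $\varepsilon$-dependence, exactly as in the remark following Theorem \ref{CEAcoding}, closing the matching bounds.

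The main obstacle I anticipate is not the sequential-decoding mechanics, which transplant directly from Section \ref{sec:EACC}, but rather the triple of subtleties introduced by the unassisted/separable-input requirement: (i) verifying that replacing the $X$-register by shared randomness and then expurgating does not degrade the Berry--Esseen asymptotics, (ii) confirming that the surviving random codewords are product across the $A_{k}$'s and therefore fall under the separable-input definition \eqref{eq:avepe}, and (iii) checking that the third-moment condition \eqref{re:Berry-condition} still holds for the non-identical ensembles $\{\mathcal{N}_{k}(\rho_{X_{k}A_{k}}),\rho_{X_{k}}\otimes\mathcal{N}_{k}(\rho_{A_{k}})\}$, especially in the infinite-dimensional bosonic setting of Section \ref{sec:Gauss}. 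The last item is where the genuine work is hidden, though the strategy parallels the entanglement-assisted proof.
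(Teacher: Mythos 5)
Your proposal follows essentially the same route as the paper: the direct part is the position-based encoding/sequential-decoding argument for classical--quantum states (the paper invokes \cite{wilde2017position} for the one-shot bound \eqref{classq}, which already packages the de-randomization of the classical $X$-register that you spell out explicitly), combined with the generalized second-order expansion \eqref{newDDrel} exactly as in Lemma \ref{CLB}; the converse is the same Holevo-type bound for the channels used in parallel. The extra care you devote to expurgation, separability of the surviving codewords, and the third-moment condition \eqref{Berrycond} is consistent with (and somewhat more detailed than) what the paper records.
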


Given a classical-quantum channel $x\rightarrow \rho_B^x$, we know
from \cite{wilde2017position} that there exists an encoding and position-based decoding for choosing
\begin{equation}\label{classq}
 \log M= D_H^{\varepsilon-\eta}\left( \rho_{X B}\Vert (\rho_X \otimes \rho_{B})\right)
 - \log(4\varepsilon/\eta^2),
 \end{equation}
where $\eta\in(0, \varepsilon)$, $\varepsilon\in(0,1)$ and
\begin{equation}
\rho_{XB}=\sum_x p(x)\vert x\rangle\langle x\vert\otimes  \rho_B^x\,,
\end{equation}
with $\rho_B^x\equiv{\cal N}(\rho_A^x)$.
 In other words, there exist an encoding $m\rightarrow \rho_A^{x_m}$ and position-based POVM
  $\{\Lambda_B^{x_m}\}_{m=1}^M$ as decoder such that, according to \eqref{eq:avepe}, and together with \cite[Theorem 5.1]{OSM2018}, $p_{\text{e}}(m)\leq \varepsilon$ holds  for all $m\in{\cal M}$.

\begin{lemma}\label{CLB}
A message $m\in {\cal M}$ can be sent through the channels $\bigotimes_{k=1}^n \mathcal{N}_k$
with $p_{\text{e}}(m)\leq \varepsilon$ by choosing
\begin{align}
  \log M &=\sum_{k=1}^n D\left({\cal N}^{A\to B}_k(\rho_{R A}) \big\Vert \rho_{R} \otimes
{\cal N}^{A\to B}_k(\rho_{A})\right) \notag\\
&+\sqrt{\sum_{k=1}^nV\left({\cal N}^{A\to B}_k(\rho_{R A}) \| \rho_{R} \otimes
{\cal N}^{A\to B}_k(\rho_{A})\right)}\Phi^{-1}\left(\varepsilon-\frac{1}{\sqrt{n}}\right) + O(\log n)\,.
\end{align}
\end{lemma}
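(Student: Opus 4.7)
The plan is to mirror the strategy used for Lemma \ref{CEALB}, but now starting from the one-shot bound \eqref{classq} for classical-quantum channels instead of the bound \eqref{pebound} for entanglement-assisted communication. First, I would replace the single channel $\mathcal{N}^{A\to B}$ appearing in \eqref{classq} by the product channel $\bigotimes_{k=1}^{n}\mathcal{N}_{k}^{A\to B}$ and feed into it the classical-quantum state $\rho_{X^{n}A^{n}}=\sum_{x^{n}}p(x^{n})\,|x^{n}\rangle\langle x^{n}|\otimes\rho_{A^{n}}^{x^{n}}$ with the product structure $\rho_{A^{n}}^{x^{n}}=\bigotimes_{k=1}^{n}\rho_{A_{k}}^{x_{k}}$ specified in Theorem \ref{Ccoding}. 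This yields, for any $\eta\in(0,\varepsilon)$, the achievability bound
\begin{equation}
\log M = D_{H}^{\varepsilon-\eta}\!\left(\bigotimes_{k=1}^{n}\mathcal{N}_{k}^{A\to B}(\rho_{X_{k}A_{k}}) \;\Big\Vert\; \bigotimes_{k=1}^{n}\rho_{X_{k}}\otimes\mathcal{N}_{k}^{A\to B}(\rho_{A_{k}})\right)-\log(4\varepsilon/\eta^{2}),
\end{equation}
with error probability below $\varepsilon$ for every message $m$.

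Next I would exploit the tensor-product structure of both arguments of the smooth hypothesis-testing relative entropy and invoke the generalized second-order asymptotic expansion \eqref{newDDrel} proven in Appendix \ref{ap:second}. Setting $\rho_{i}:=\mathcal{N}_{i}^{A\to B}(\rho_{X_{i}A_{i}})$ and $\sigma_{i}:=\rho_{X_{i}}\otimes\mathcal{N}_{i}^{A\to B}(\rho_{A_{i}})$, the expansion turns the smooth relative entropy into $\sum_{k} D(\rho_{k}\Vert\sigma_{k}) + \sqrt{\sum_{k} V(\rho_{k}\Vert\sigma_{k})}\,\Phi^{-1}(\varepsilon-\eta) + O(\log n)$, provided the Berry--Esseen-type condition on the $T$ quantities holds. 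Finally, choosing $\eta=1/\sqrt{n}$ absorbs the $-\log(4\varepsilon/\eta^{2})$ term into $O(\log n)$ and produces the claimed formula.

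The substantive step is verifying that the hypotheses of \eqref{newDDrel} are met for the sequences $\{\rho_{k}\}$ and $\{\sigma_{k}\}$ arising from the channels $\mathcal{N}_{k}^{A\to B}$. The finiteness of $D$, $V$, and $T$ follows from the corresponding assumption on the classical-quantum inputs, but one must still check the ratio
\begin{equation}
\lim_{n\to\infty}\frac{6\sum_{k=1}^{n}T(\mathcal{N}_{k}^{A\to B}(\rho_{X_{k}A_{k}})\Vert\rho_{X_{k}}\otimes\mathcal{N}_{k}^{A\to B}(\rho_{A_{k}}))}{\sqrt{\left[\sum_{k=1}^{n}V(\mathcal{N}_{k}^{A\to B}(\rho_{X_{k}A_{k}})\Vert\rho_{X_{k}}\otimes\mathcal{N}_{k}^{A\to B}(\rho_{A_{k}}))\right]^{3}}}=0,
\end{equation}
which is the direct analogue of \eqref{re:Berry-condition} and plays exactly the same role here as in the entanglement-assisted case. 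The only other place requiring attention is cosmetic: the statement as printed carries over the symbols $\rho_{RA}$, $\rho_{R}$, $\rho_{A}$ from the entanglement-assisted proof, whereas in the unassisted setting these should be understood as the classical-quantum state $\rho_{X_{k}A_{k}}$ and its marginals, so the substitution must be made consistently throughout the derivation.
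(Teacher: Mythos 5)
Your proposal matches the paper's own proof: both start from the one-shot classical-quantum bound \eqref{classq}, tensorize it over the $n$ channel uses, apply the generalized second-order expansion \eqref{newDDrel} under the Berry--Esseen condition, and choose $\eta=1/\sqrt{n}$ to absorb $-\log(4\varepsilon/\eta^2)$ into $O(\log n)$. Your remark that the lemma statement's $\rho_{RA},\rho_R,\rho_A$ should read as the classical-quantum state $\rho_{XA_k}$ and its marginals is also consistent with the paper, whose derivation is carried out entirely in terms of $\rho_{XB_k}$ and $\rho_X\otimes\rho_{B_k}$.
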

\begin{proof}
If we replace $\rho_{XB}$ with $\bigotimes_{i=1}^n \rho_{XB_i}$ in Eq.\eqref{classq}, then we have
\begin{equation}\label{32}
  \log M { =}  D_H^{\varepsilon-\eta}\left(\bigotimes_{k=1}^n \rho_{X B_k}
  \Big\Vert \bigotimes_{k=1}^n(\rho_X \otimes \rho_{B_k})\right) -\log(4\varepsilon/\eta^2) \,.
\end{equation}
 As a consequence, following the arguments of Lemma \ref{CEALB},
we can get
 \begin{equation}\label{30}
\log M(\varepsilon) { =} \sum_{k=1}^n D\left(\rho_{X B_k} \Vert \rho_{X} \otimes
\rho_{B_k}\right) +\sqrt{\sum_{k=1}^nV\left(\rho_{X B_k} \| \rho_{R} \otimes \rho_{B_k}\right)}\Phi^{-1}\left(\varepsilon-\frac{1}{\sqrt{n}}\right) + O(\log n)\,,
\end{equation}
with the condition
\begin{equation}\label{Berrycond}
  \lim_{n\to \infty}\frac{6 \sum_{k=1}^n \left(T\left(\rho_{X B_k} \Vert \rho_{X} \otimes
\rho_{B_k}\right)\right)}{\sqrt{\left(\sum_{k=1}^nV\left(\rho_{X B_k} \Vert  \rho_{X} \otimes \rho_{B_k}\right)\right)^3}}=0\, .
\end{equation}
\hfill
\end{proof}

\medskip

\begin{proof}\textbf{Theorem \ref{Ccoding}.}
The direct part is based on the result of Lemma \ref{CLB} which provides a lower bound
for the capacity, namely
\begin{align}\label{lowB2}
 n\,C(\mathfrak{N},n,\varepsilon)\geq
\max_{\rho_{XA^n}}
\left[\sum_{k=1}^n D\left(\rho_{X B_k} \Vert \rho_{X} \otimes
\rho_{B_k}\right) +\sqrt{\sum_{k=1}^nV\left(\rho_{X B_k} \| \rho_{R} \otimes \rho_{B_k}\right)}\Phi^{-1}\left(\varepsilon-\frac{1}{\sqrt{n}}\right) + O(\log n)\right]\,.
\end{align}
For the converse part, since the channels are independent, though not identical, it is like to
have them used in parallel, hence  from the Holevo bound we have
\begin{align}\label{uppB2}
 n\,C(\mathfrak{N},n,\varepsilon) \leq \max_{\rho_{XA^n}}
\sum_{k=1}^n D\left(\rho_{X B_k} \Vert \rho_{X} \otimes
\rho_{B_k}\right).
\end{align}
\hfill
\end{proof}

\begin{remark}
From Eqs.\eqref{lowB2} and \eqref{uppB2} it is evident that, by taking the limit $n\to\infty$,
the dependence on $\epsilon$ disappears. Hence the subsequent limit $\epsilon\to 0$ results superfluous.
\end{remark}

\begin{remark}
We have employed the hypothesis testing based on the Berry-Esseen theorem to get the formula \eqref{30} as average of relative entropies of single channels. In contrast for arbitrarily varying quantum channels, Ref.\cite{AB07}, it was used the sum of $n$ entropies greater than $n$ times the smallest of them. Furthermore, in case we consider a finite dimensional Hilbert space,
and parallel the error bound \eqref{30} with that obtained for arbitrarily varying quantum channels \cite{AB07}, we note that the former only depends on the number of channel uses, while the latter also on the size of channels' set.
      \end{remark}
      
%%%%%%%%%%%%%%%%%%%%%%%%%%%%%%%%%%%%%%%%%%%%%%%%%%%

\section{Memoryless but not identical Gaussian lossy channels}\label{sec:Gauss}

Since the coding Theorems \ref{CEAcoding} and \ref{Ccoding} were derived without any restriction on the dimensionality of Hilbert spaces, they can be straightforwardly applied to continuous variable (bosonic) quantum channels.

We shall focus on a sequence of Gaussian lossy channels (each acting on a single bosonic mode)
$\{{\cal N}_{\eta_k}\}_{k=1}^\infty$, where
$\eta_k\in(0,1)$ is the trasmissivity characterizing the $k$th channel. As customary we shall also consider an average energy $N$ per channel use, so to have the constraint
\begin{equation}\label{constraint}
\sum_{k=1}^n N_k=n N\,,
\end{equation}
on the effective energy $N_k$ employed at $k$th use.

%%%%%%%%%%%%%%%%%%%%%%%%%%%%%%%%%%%%%%%%%%%%%%

\subsection{Entangled assisted classical capacity}

We consider Alice and Bob sharing $M$ two-mode squeezed state each with photon mean number $N_k$. We want to see how the capacity resulting from Theorem \ref{CEAcoding} is approached over channel uses.

In Ref.\cite{HSH98} it has been shown that
\begin{equation}\label{QRE}
D\left({\cal N}_{A\to B}^k(\rho_{R A}) \| \rho_{R} \otimes {\cal N}_{A\rightarrow B}^k(\rho_{A})\right)=g(N_{k})+g(\eta_k N_{k})-g((1-\eta_k) N_{k})\,,
\end{equation}
where $g(x)\equiv(x+1)\log(x+1)-x\log x$.
In addition, the  quantum relative entropy variance is computed in \cite{KW17a} as
\begin{align}\label{re:34}
V\left({\cal N}_{A\to B}^k(\rho_{R A}) \| \rho_{R} \otimes {\cal N}_{A\to B}^k(\rho_{A})\right)
&=(1-\eta_k)N_{k} \left((1-\eta_k)N_{k}+1   \right)  \left[ \log\left( 1+\frac{1}{(1-\eta_k)N_{k}}   \right)   \right]^2 \notag\\
&-2(1-\eta_k)N_{k} (N_{k}+1) \log\left( 1+ \frac{1}{(1-\eta_k)N_{k}}    \right)\log\left( 1+ \frac{1}{N_{k}}   \right) \notag\\
&+ N_{k} (N_{k}+1 )\left[ \log\left(1+ \frac{1}{N_{k}}\right)     \right]^2 \, .
\end{align}
According to Theorem \ref{CEAcoding} and \eqref{QRE} we now need to
maximize the quantity
\begin{equation}\label{sumgV}
\sum_{k=1}^n \left[g\left(N_k\right)
+g\left(\eta_k N_k\right)-g\left((1-\eta_k)N_k\right)
\right],
\end{equation}
with respect to $N_k$. This amounts to set
\begin{align}
\delta \left\{ \text{Eq.\eqref{sumgV}} \right\}
=\sum_{k=1}^n \Bigg[& g'\left(N_k\right)
+\eta_k g'\left(\eta_k N_k\right)- (1-\eta_k) g'\left(({1-}\-\eta_k)N_k\right)\Bigg]
\delta N_k=0,
\end{align}
where $g'$ stands for the derivative of $g$ with respect to its argument.
From the energy constraint \eqref{constraint} we further have
\begin{equation}
\delta\left\{\sum_{k=1}^n N_k\right\}=\sum_{k=1}^n\delta N_k=0.
\end{equation}
Using a Lagrange multiplier $\beta$ we get
\begin{align}
\sum_{k=1}^n \Bigg[ g'\left(N_k \right)
+\eta_k g'\left(\eta_k N_k\right)- (1-\eta_k) g'\left(({1-}\-\eta_k)N_k\right)-\beta\Bigg] \delta N_k=0.
\end{align}
Solving the set of $n+1$ equations
\begin{equation}
\left\{
\begin{array}{cl}
g'\left(N_k \right)
+\eta_k g'\left(\eta_k N_k\right)- (1-\eta_k) g'\left(({1-}\-\eta_k)N_k\right)-\beta &=0 \\ \\
\sum_{k=1}^n N_k&=nN
\end{array}\right.,
\end{equation}
allows us to find the $N_k$ and $\beta$ giving
\begin{equation}\label{CEn}
 n \, \overline{C_{E}}(\{{\cal N}_{\eta_k}\},n)\equiv\max_{N_k} \sum_{k=1}^n \left[g\left(N_k\right)
+g\left(\eta_k N_k\right)-g\left((1-\eta_k)N_k\right)
\right].
\end{equation}
 where  $\overline{C_{E}}(\{{\cal N}_{\eta_k}\},n)$ denotes the upper bound on
$C_{E}(\{{\cal N}_{\eta_k}\},n,\epsilon)$.
Clearly $\lim_{n\to\infty} C_{E}(\{{\cal N}_{\eta_k}\},n,\epsilon)=C_E(\{ {\cal N}_{\eta_k} \})$.

The variance of the quantum relative entropy  $\overline{C_{E}}(\{{\cal N}_{\eta_k}\},n)$ can be obtained by means of
\eqref{re:34} as
\begin{equation}\label{VCE}
\frac{1}{n^2}\sum_{k=1}^n V\left({\cal N}_{A\to B}^k(\rho_{R A}) \| \rho_{R} \otimes {\cal N}_{A\to B}^k(\rho_{A})\right).
\end{equation}

%%%%%%%%%%%%%%%%%%%%%%%%%%%%%%%%%%%%%%%%%%%%%%%%%%%

\subsection{Unassisted classical capacity}

 Here we want to see how the capacity resulting from Theorem \ref{Ccoding} is approached over channel uses.

From Ref.~\cite{WRG15}, we know that
\begin{equation}\label{eq:Dclass}
D(\rho_{XB_k}\Vert \rho_X\otimes \rho_{B_k})=g(\eta_k N_k) \,,
\end{equation}
and
\begin{equation}\label{eq:Vcc}
V(\rho_{XB_k}\Vert \rho_X\otimes \rho_{B_k})=\eta_k N_k (\eta_k N_k+1)
\left[\log\left(\eta_k N_k+1\right)-\log\left(\eta_k N_k\right)\right]^2 \,.
\end{equation}
Then, according to Theorem \ref{CEAcoding} and \eqref{QRE} we now need to
maximize the quantity
\begin{equation}
\sum_{k=1}^n g(\eta_k N_k),
\end{equation}
with  respect to $N_k$.

Proceeding like in the previous section,
using a Lagrange multiplier $\beta$ and imposing \eqref{constraint}, we get
\begin{equation}
\left\{
\begin{array}{cl}
g'\left(N_k \right)-\beta &=0 \\ \\
\sum_{k=1}^n N_k&=nN
\end{array}\right..
\end{equation}
Solving this set of $n+1$ equations
allows us to find the $N_k$ and $\beta$ giving
\begin{equation}\label{CEn}
 n\, \overline{C}(\{ {\cal N}_{\eta_k}\},n) \equiv\max_{N_k} \sum_{k=1}^n \left[g\left(N_k\right)
\right].
\end{equation}
 where  $\overline{C}(\{{\cal N}_{\eta_k}\},n)$ denotes the upper bound on
$C(\{{\cal N}_{\eta_k}\},n,\epsilon)$.
Clearly $\lim_{n\to\infty} \overline{C}(\{ {\cal N}_{\eta_k}\},n)=C(\{ {\cal N}_{\eta_k} \})$.

The variance of the quantum relative entropy
 $\overline{C}(\{ {\cal N}_{\eta_k}\},n)$
can be obtained by means of
\eqref{eq:Vcc} as
\begin{equation}\label{VC}
\frac{1}{n^2}\sum_{k=1}^n V(\rho_{XB_k}\Vert \rho_X\otimes \rho_{B_k}).
\end{equation}

%%%%%%%%%%%%%%%%%%%%%%%%%%%%%%%%%%%%%%%%%%%%%%%%%%

 \section{Examples}\label{sec:EX}

 We now apply the results
of Sec.\ref{sec:Gauss} to some specific cases study,
i.e. specific sequences of lossy channels.

\subsection{Example 1}\label{sec:sub1}

Consider
\begin{equation}\label{etai1}
\eta_k=\underline{\eta}+\overline{\eta} e^{-(k-1)^2/\Delta}, \qquad 0 < \underline{\eta},\overline{\eta} <\frac{1}{2}\, .
\end{equation}
After a transient (whose extension is determined by $\Delta$) the channel reaches a transmissivity
$\underline{\eta}$ (see Fig.\ref{fig1}).  The distribution of input energy shows a similar behavior to transmissivity (see Fig.\ref{fig1}). Note however that the sequence  of input energies
$\{N_k^{(n)}\}_k$ depends on  the number $n$ of channel uses.

\begin{figure}[H]
	\centering
   \includegraphics[width=0.4\textwidth]{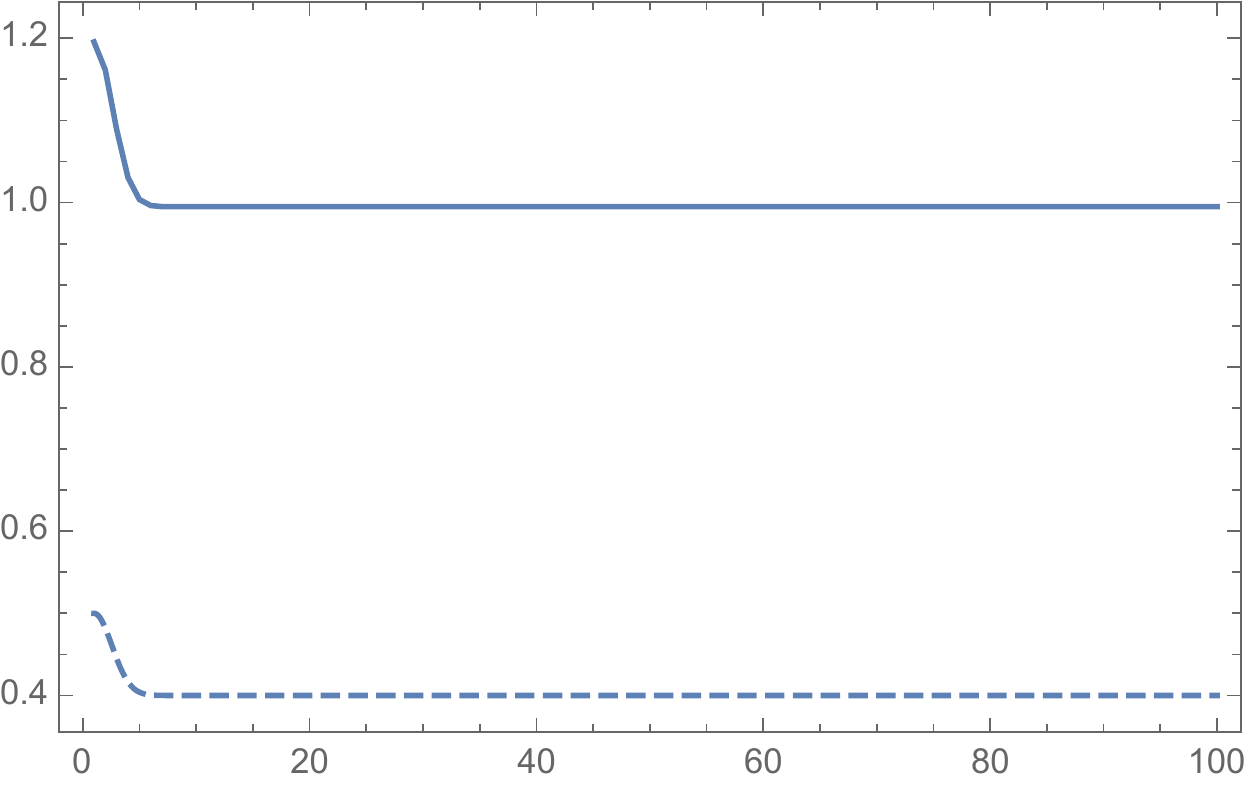} \quad
      \includegraphics[width=0.4\textwidth]{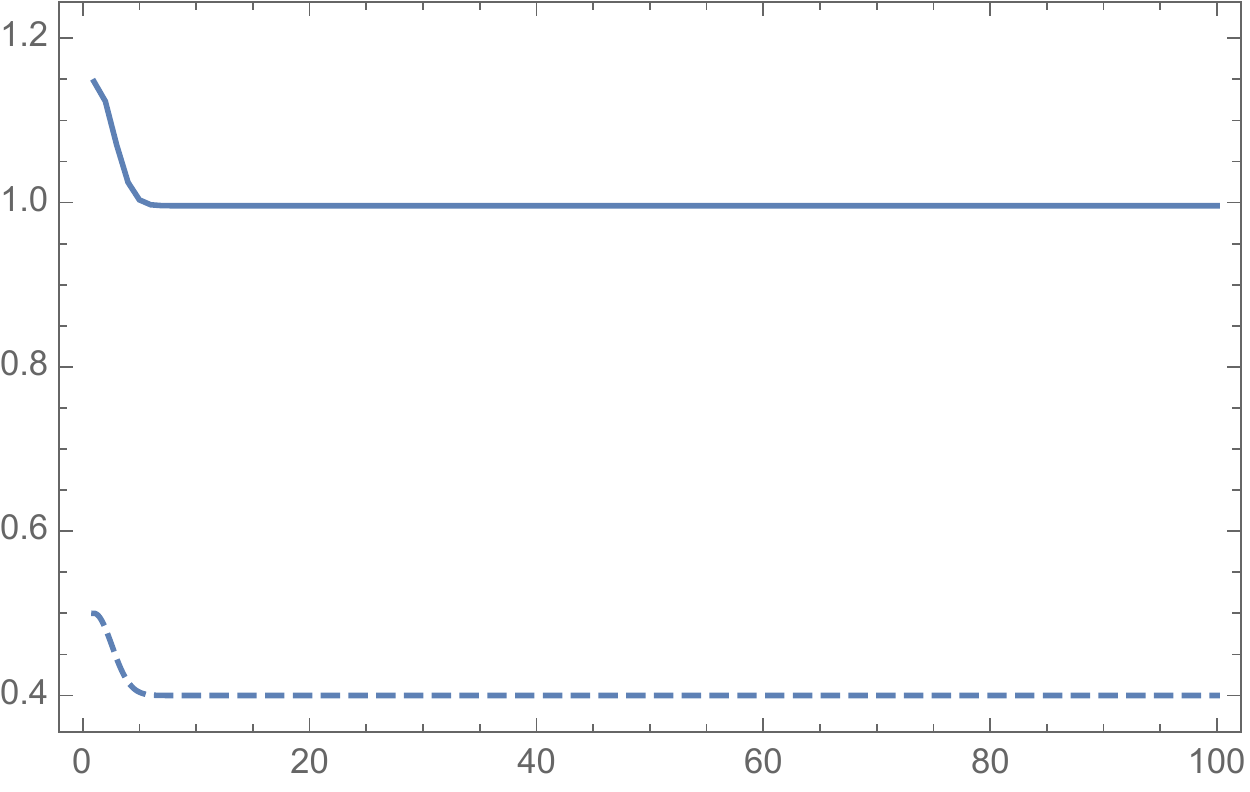}
             	\caption Quantities $\eta_k$ (dashed line) and  $N_k^{(n)}$ (solid line) vs $k$ for $n=100$. On the left (resp. right) is the case for entanglement assisted (resp. unassisted) classical communication.
	It is $\Delta=5$. The values of other parameters are $\underline{\eta}=0.4,\overline{\eta}=0.1,N=1$.
	\label{fig1}
\end{figure}

For this example the capacity
 $C_{E}(\{ {\cal N}_{\eta_k}\})$ can be guessed by simply computing
$\overline{C_E}(\{ {\cal N}_{\lim_{k\to\infty}\eta_k}\},n=1)$ (see Fig.\ref{fig2} left).
Analogous argument holds true for $C(\{ {\cal N}_{\eta_k}\})$, i.e. it can be guessed by simply computing $\overline{C}(\{ {\cal N}_{\lim_{k\to\infty}\eta_k}\},n=1)$ (see Fig.\ref{fig2} right).

\begin{figure}[H]
	\centering
	\includegraphics[width=0.4\textwidth]{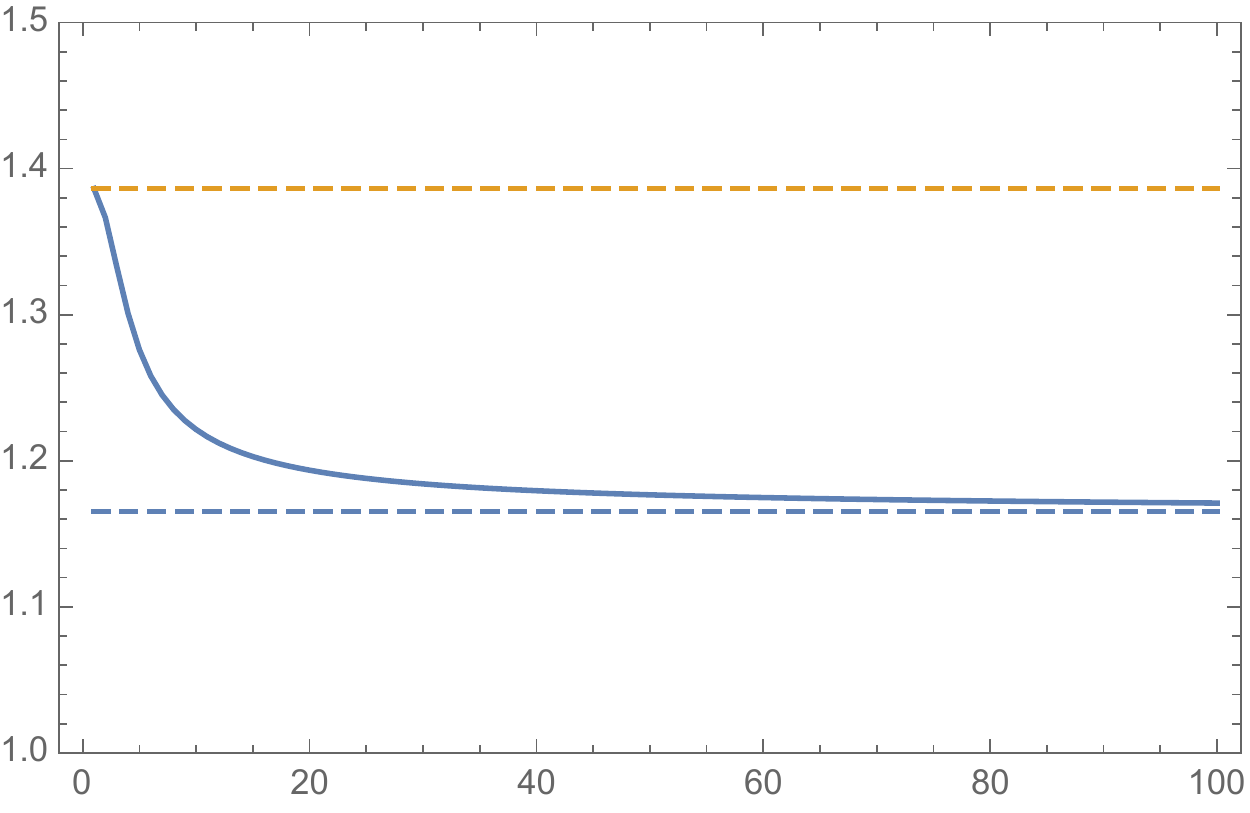} \quad
\includegraphics[width=0.4\textwidth]{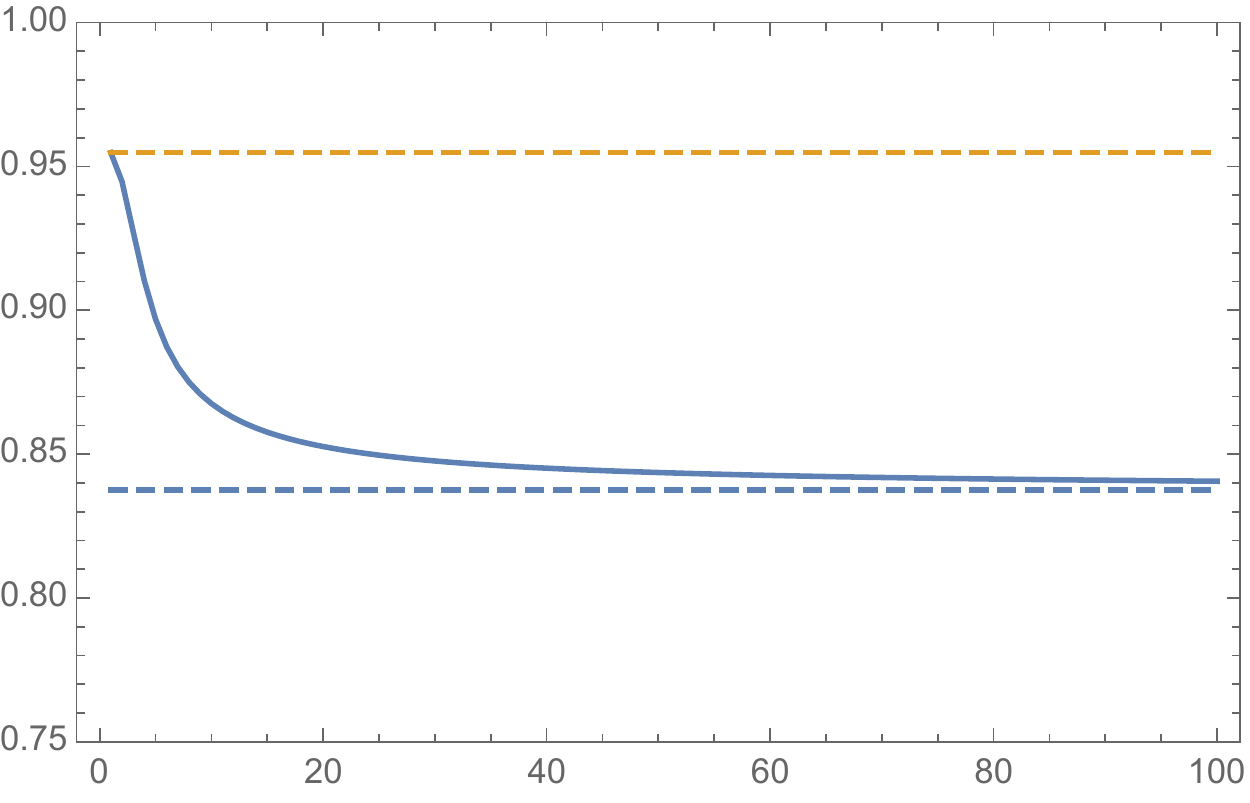}
	\caption{(Left) $\overline{C_E}(\{ {\cal N}_{\eta_k}\},n)$ vs $n$ for $\Delta=5$  (solid line). The bottom (resp. top) dashed line represents the capacity $g(N)+g(\underline{\eta}N)-g((1-\underline{\eta})N)$ (resp. $g(N)+g((\underline{\eta}+\overline{\eta})N)-g((1-(\underline{\eta}+\overline{\eta}))N)$).
	(Right)  $\overline{C}(\{ {\cal N}_{\eta_k}\},n)$ vs $n$ for $\Delta=5$  (solid line).  The bottom (resp. top) dashed line represents the capacity $g(\underline{\eta}N)$ (resp. $g((\underline{\eta}+\overline{\eta})N)$).
	The values of other parameters are $\underline{\eta}=0.4,\overline{\eta}=0.1, N=1$.}
	\label{fig2}
\end{figure}

In Fig.\ref{fig3} we report the variances \eqref{VCE} and \eqref{VC} as functions of $n$.  As one can see the variance of $\overline{C_E}(\{ {\cal N}_{\eta_k}\},n)$ converges faster than that of
$\overline{C}(\{ {\cal N}_{\eta_k}\},n)$.
\begin{figure}[H]
	\centering
	\includegraphics[width=0.4\textwidth]{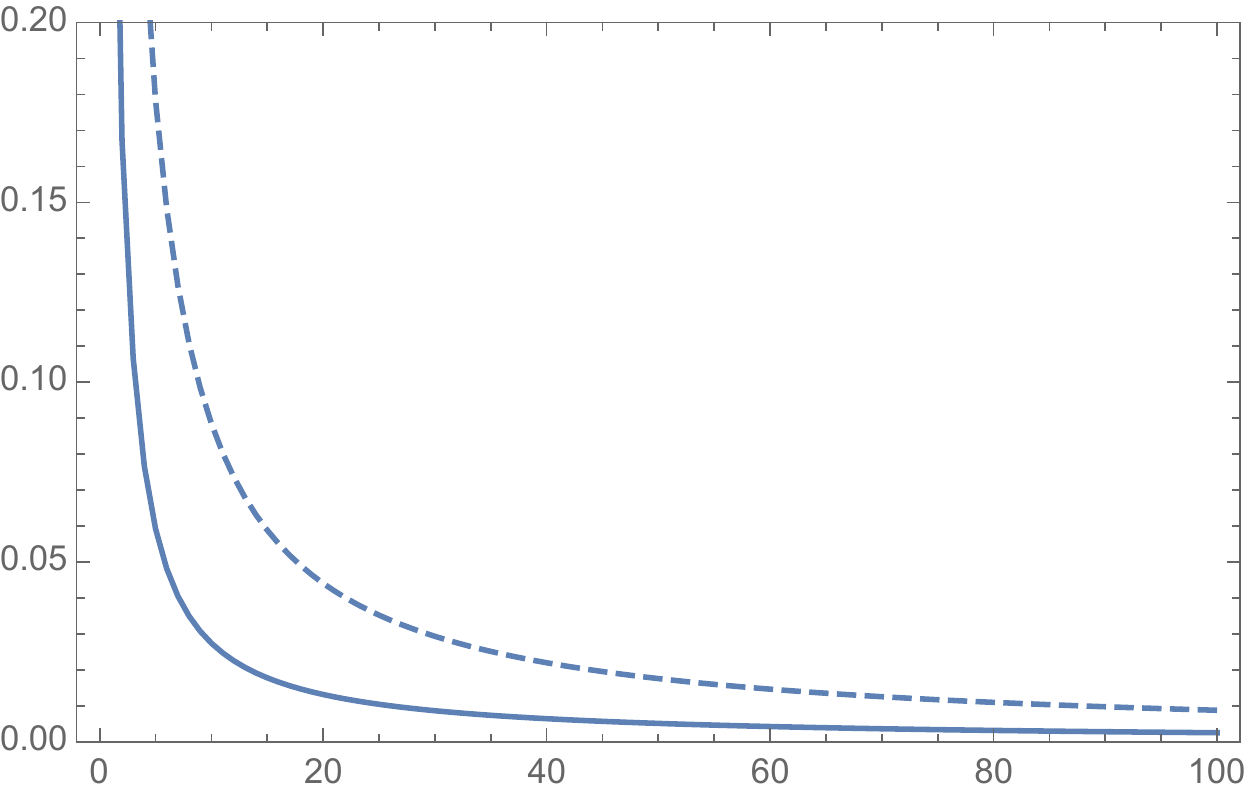}
	\caption{ Variance of $\overline{C_E}(\{ {\cal N}_{\eta_k}\},n)$ (solid line) and of
	$\overline{C}(\{ {\cal N}_{\eta_k}\},n)$ (dashed line) vs $n$ for $\Delta=5,\underline{\eta}=0.4,\overline{\eta}=0.1,N=1$.}
	\label{fig3}
\end{figure}

%%%%%%%%%%%%%

\subsection{Example 2}\label{sec:sub2}

Consider
\begin{equation}\label{etai2}
\eta_k=\underline{\eta}+\overline{\eta} \left|\sin\left(\frac{ k-1}{\Delta}+\frac{\pi}{2}\right) \right|, \qquad 0 < \underline{\eta},\overline{\eta} <\frac{1}{2}.
\end{equation}
In this case we have an oscillatory behavior of $\eta_k$ (whose frequency is determined by $\Delta$), as can be seen in Fig.\ref{fig4}. The distribution of input energy still follows the behavior of transmissivity.

\begin{figure}[H]
	\centering
	\includegraphics[width=0.4\textwidth]{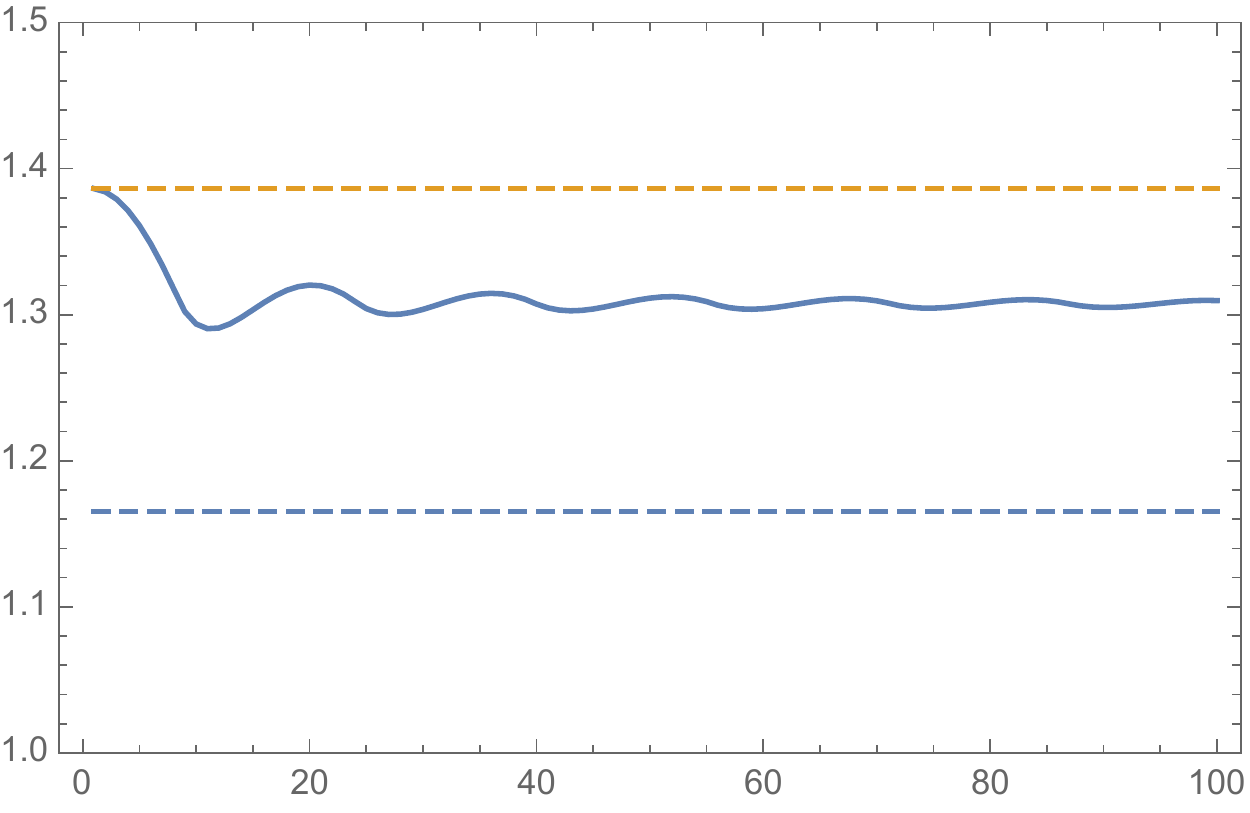} \quad
		\includegraphics[width=0.4\textwidth]{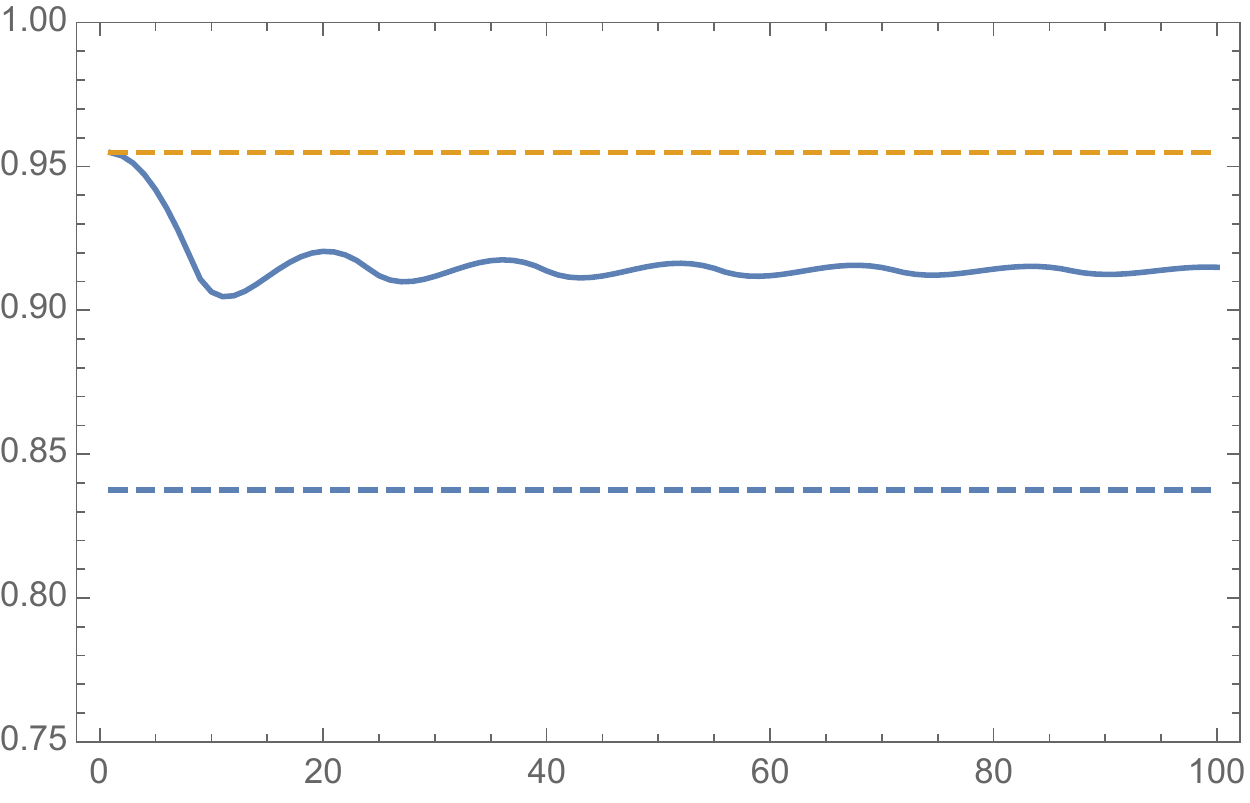}
	\caption{Quantities $\eta_k$ (dashed line) and  $N_k^{(n)}$ (solid line) vs $k$ for $n=100$. On the left (resp. right) is the case for entanglement assisted (resp. unassisted) classical communication.
	It is $\Delta=5$. The values of other parameters are $\underline{\eta}=0.4,\overline{\eta}=0.1,N=1$.
}
	\label{fig4}
\end{figure}

In this case the capacities cannot be guessed by the $\lim_{k\to\infty}\eta_k$, because this latter does not exist, however the  bound $\overline{C_E}(\{ {\cal N}_{\eta_k}\},n)$, after transient oscillations depending on $\Delta$, converges to a well definite value for $n \to \infty$ (see Fig.\ref{fig5} left). The same happens for the  bound $\overline{C}(\{ {\cal N}_{\eta_k}\},n)$ (see Fig.\ref{fig5} right).

\begin{figure}[H]
	\centering
	\includegraphics[width=0.4\textwidth]{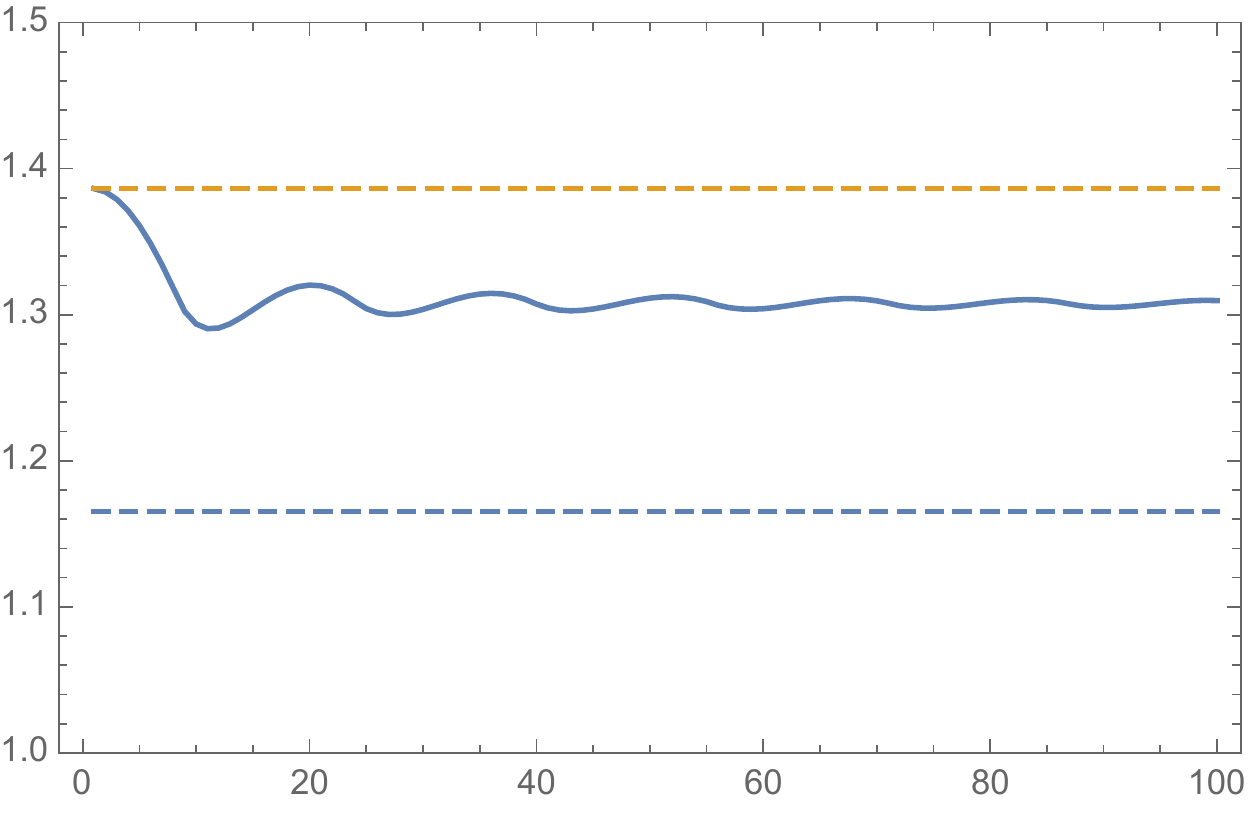} \quad
\includegraphics[width=0.4\textwidth]{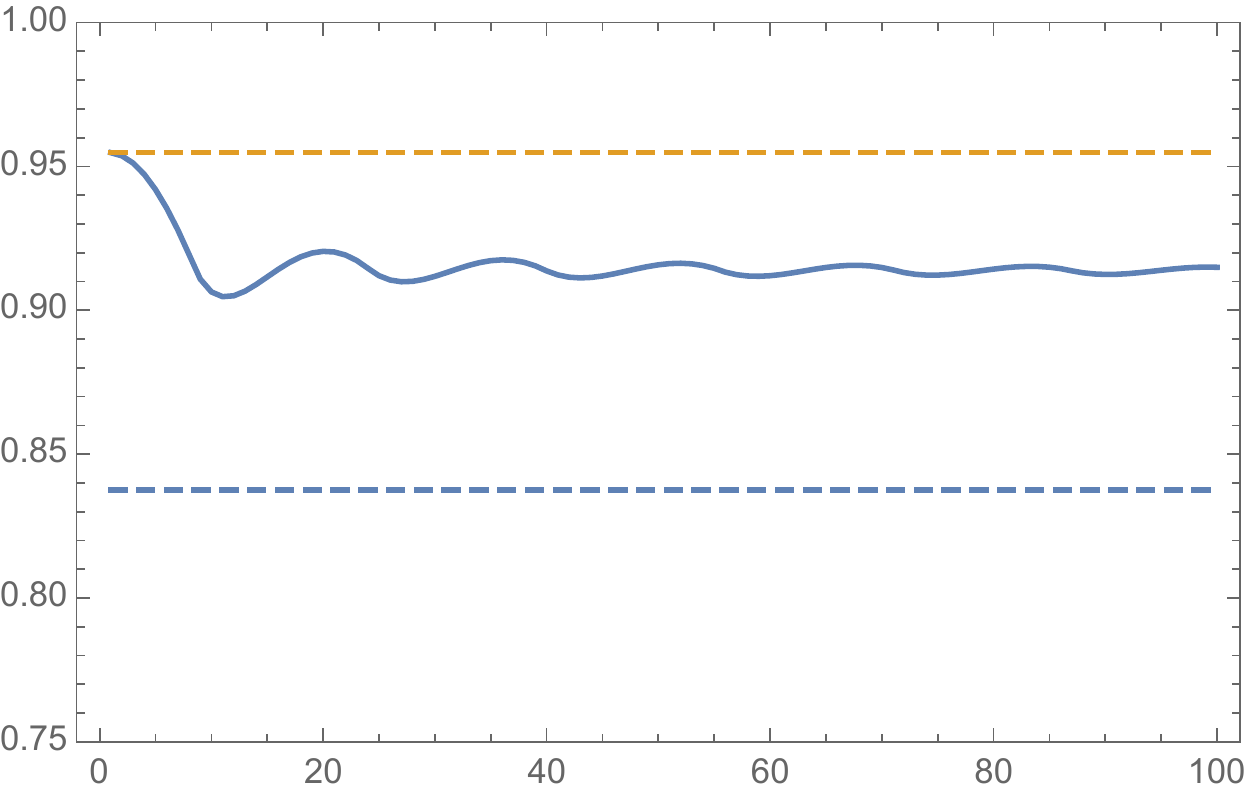}
		\caption{(Left)  $C_E(\{ {\cal N}_{\eta_k}\},n,\epsilon)$ vs $n$ for $\Delta=5$  (solid line). The bottom (resp. top) dashed line represents the capacity $g(N)+g(\underline{\eta}N)-g((1-\underline{\eta})N)$ (resp. $g(N)+g((\underline{\eta}+\overline{\eta})N)-g((1-(\underline{\eta}+\overline{\eta}))N)$).
	(Right)  $C_E(\{ {\cal N}_{\eta_k}\},n,\epsilon)$ vs $n$ for $\Delta=5$  (solid line).  The bottom (resp. top) dashed line represents the capacity $g(\underline{\eta}N)$ (resp. $g((\underline{\eta}+\overline{\eta})N)$).
	The values of other parameters are $\underline{\eta}=0.4,\overline{\eta}=0.1, N=1$.}
	\label{fig5}
\end{figure}

 As for what concerns the variance of  $\overline{C_E}(\{ {\cal N}_{\eta_k}\},n)$
 and of $\overline{C}(\{ {\cal N}_{\eta_k}\},n)$, the behavior is quite similar to that of Fig.\ref{fig3}.

%%%%%%%%%%%%%%%%%%%%%%%%%%%%%%%%%%%%%%%%%%%%%%%%%%

\section{Conclusion}\label{sec:conclu}

In conclusion, we studied quantum channels that vary from one to another use in
 a deterministic way.
 To analyze their ability in transmitting classical information
we resorted to a smoothed version of quantum relative entropy.
As  the first result, we derived  a generalization of the relation between smooth and standard quantum relative entropies to the case of tensor product of non-identical density operators, which is valid also for separable Hilbert spaces (Eq.\eqref{newDDrel}).
We then proved coding theorems  for the classical entanglement assisted and unassisted capacities
(Theorems \ref{CEAcoding} and \ref{Ccoding} respectively). For that, we used position-based coding and quantum union bound applied to sequential decoding.
The results were then adapted, with the help of input energy constraints, to continuous variable quantum channels, specifically lossy bosonic.
Finally, enlightening examples were put forward in this context. They show that only when the sequence of channels parameter has a well defined limit, the capacities can be easily evaluated.
The approach taken allowed us to evaluate the maximum transmission rate for any number
of channel uses and estimate the error.

The natural extension of this work would be
the study of time-varying channels in a non-deterministic way.
Quite generally channels that are selected over the uses
according to a probability distribution that is itself varying from one to another use,
thus generalizing the arbitrarily varying channel model.
We are confident that the mathematical tools developed here will be useful to this end.
In another direction, one can pursue the quantum capacity of the introduced sequences of channels, which however needs slightly different tools.

\section*{Acknowledgments}
The authors are grateful to Mark M. Wilde for useful discussions.

\appendix

\section{Second-order asymptotic}\label{ap:second}

In this Appendix we derive Eq.\eqref{newDDrel}, which represents a generalization of the relation between smooth and standard quantum relative entropies to the case of tensor product of non identical density operators, which is valid also for separable Hilbert spaces.  The proof of inequality $\geq$ in Eq.\eqref{newDDrel} is based on \cite{DPR15, li12, OSM2018}.

%%%%%%%%%%

Let us assume $\rho_i, \sigma_i$, for $i=1, 2, \ldots$ be full rank density operators on Hilbert space $\mathcal{H}$. Let us consider their spectral decompositions as follows
\begin{equation}
\rho_i=\sum_{x_i} \lambda_{x_i} P_{x_i}\, ,
\end{equation}
and
\begin{equation}
\sigma_i=\sum_{y_i} \mu_{y_i} Q_{y_i}\, .
\end{equation}
We also define two distributions for any two density operators $\rho=\sum_x  \lambda_x  P_x$ and  $\sigma=\sum_x  \mu_x  Q_y$ as follows
\begin{equation}
P_{\rho, \sigma}=\lambda_{x}\operatorname{Tr}(P_x Q_y)\, ,
\end{equation}
and
\begin{equation}
Q_{\rho, \sigma}=\mu_{y} \operatorname{Tr}(P_x Q_y)\, .
\end{equation}
We then introduce
\begin{equation}
P_{\rho, \sigma}^n(x_1,\ldots, x_n)=\prod_i P_{\rho, \sigma}(x_i)\, ,
\end{equation}
and
\begin{equation}
Q_{\rho, \sigma}^n(x_1,\ldots, x_n)=\prod_i Q_{\rho, \sigma}(y_i)\, .
\end{equation}

\begin{lemma}\label{lem6}
Let  $\rho$ and $\sigma$ be two density operators acting on a separable Hilbert space $\mathcal{H}$. For a given number $L$, there exists a measurement operator $T_L$ ($0\leq T_L\leq I$ ) such that
\begin{equation}
\operatorname{Tr}(T_L \rho)\geq \operatorname{Pr}(Z\geq \log{L}),\quad \operatorname{Tr}(T_L \sigma)\leq \frac{1}{L}\, ,
\end{equation}
where $Z$ is a random variable defined by $Z\equiv\log P_{\rho,\sigma}(X)-\log Q_{\rho, \sigma}(X)$.
\end{lemma}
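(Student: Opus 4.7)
My approach is to treat the problem as a classical-to-quantum translation of a Neyman--Pearson test defined on the Nussbaum--Szkoła joint distribution $P_{\rho,\sigma},Q_{\rho,\sigma}$. Before constructing $T_L$ I would refine the spectral decompositions so that each $P_{x}$ and $Q_{y}$ is rank one, which is legitimate on a separable Hilbert space because trace-class operators have discrete spectrum with $0$ as the only possible accumulation point. In this normalization the target tail probability reads
\[
\operatorname{Pr}(Z\geq\log L)=\sum_{(x,y)\in A}\lambda_{x}\operatorname{Tr}(P_{x}Q_{y}),\qquad A:=\{(x,y):\lambda_{x}\geq L\mu_{y}\},
\]
and $\sum_{y}\operatorname{Tr}(P_{x}Q_{y})=\operatorname{Tr}(P_{x})=1$ for every $x$.

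For the Type~II bound I would try the quantum Neyman--Pearson projector $T_{L}:=\{\rho-L\sigma\geq 0\}$, the spectral projection of the self-adjoint trace-class operator $\rho-L\sigma$ onto its non-negative subspace. Because $T_{L}(\rho-L\sigma)\geq 0$ as a bounded operator, taking the trace yields $\operatorname{Tr}(T_{L}\rho)\geq L\operatorname{Tr}(T_{L}\sigma)$, and combining with $\operatorname{Tr}(T_{L}\rho)\leq 1$ produces $\operatorname{Tr}(T_{L}\sigma)\leq 1/L$. The subtler Type~I bound $\operatorname{Tr}(T_{L}\rho)\geq\operatorname{Pr}(Z\geq\log L)$ requires a separate argument: I would introduce an auxiliary classical test associated with the index set $A$, compare it to the quantum test through a Nussbaum--Szkoła-type trace inequality applied to the positive part $(\rho-L\sigma)_{+}$, and, if necessary, replace $T_{L}$ by a suitable convex combination with $I$ (permissible since the lemma asks only for existence) so that the Type~I inequality is achieved while the Type~II inequality is preserved. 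The extension to a separable Hilbert space then reduces, by truncation of both spectra to their $N$ largest eigenvalues and passage to the limit, to the finite-dimensional case, which is legitimate thanks to continuity of the trace norm under such truncation.

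The main obstacle is the Type~I lower bound. Unlike Type~II it is not a direct consequence of Neyman--Pearson optimality, because the Nussbaum--Szkoła distributions $P_{\rho,\sigma},Q_{\rho,\sigma}$ are not the output of any honest POVM on $\mathcal{H}$ and ordinary data-processing monotonicity therefore does not supply the inequality in the direction one needs. The proof in \cite{DPR15,li12,OSM2018} proceeds by an explicit analysis in the joint spectral basis, distinguishing the pairs $(x,y)\in A$ from those outside and exploiting the unit-vector interpretation $\operatorname{Tr}(P_{x}Q_{y})=|\langle x|y\rangle|^{2}$; this is exactly why the rank-one refinement performed at the beginning of the argument is essential.
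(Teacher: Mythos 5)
The paper does not actually prove this lemma --- it is imported wholesale from \cite{DPR15,li12,OSM2018} --- so the comparison here is between your sketch and the standard proof in those references. Your sketch has a genuine gap exactly where you flag "the main obstacle": the Type~I bound is never established, and the two devices you propose for it do not work. First, the Neyman--Pearson projector $T_L=\{\rho-L\sigma\geq 0\}$ does give $\operatorname{Tr}(T_L\sigma)\leq 1/L$ by the argument you state, but there is no known (and no obvious) route from that projector to $\operatorname{Tr}(T_L\rho)\geq\Pr(Z\geq\log L)$; Neyman--Pearson optimality only controls the combination $\operatorname{Tr}(T\rho)-L\operatorname{Tr}(T\sigma)$, which compared against the test promised by the lemma yields the vacuous bound $\operatorname{Tr}(T_L\rho)\geq\Pr(Z\geq\log L)-1+L\operatorname{Tr}(T_L\sigma)$. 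Second, the fallback of mixing with the identity, $T'=(1-p)T_L+pI$, raises $\operatorname{Tr}(T'\sigma)$ to $(1-p)\operatorname{Tr}(T_L\sigma)+p$ and therefore destroys the Type~II bound for any $p$ large enough to matter; existence of \emph{some} test satisfying one inequality and \emph{some} test satisfying the other does not give a single test satisfying both, and that joint achievement is the entire content of the lemma.

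The proof in \cite{li12} (extended to separable $\mathcal{H}$ in \cite{DPR15,OSM2018}) does not use the Neyman--Pearson test at all. Writing the distinct eigenvalues of $\sigma$ as $\nu_1>\nu_2>\cdots$ with spectral projections $Q_j$, one sets $\Pi_j\equiv\{\rho\geq L\nu_j\}$ (an increasing family), $\Delta_j\equiv\Pi_j-\Pi_{j-1}$, and
\begin{equation*}
T_L\;\equiv\;\sum_j \Delta_j\Bigl(\sum_{j'\geq j}Q_{j'}\Bigr)\Delta_j\,.
\end{equation*}
Then $0\leq T_L\leq\sum_j\Delta_j\leq I$; the bound $\operatorname{Tr}(T_L\sigma)=\sum_{j'}\nu_{j'}\operatorname{Tr}(\Pi_{j'}Q_{j'}\Pi_{j'})\leq L^{-1}\sum_{j'}\operatorname{Tr}(\rho Q_{j'})=L^{-1}$ follows from $\Pi_{j'}\rho\Pi_{j'}\geq L\nu_{j'}\Pi_{j'}$; and the Type~I bound follows because $[\Delta_j,\rho]=0$ reduces $\operatorname{Tr}(T_L\rho)$ to a sum of the classical weights $\lambda_x\operatorname{Tr}(P_xQ_{j'})$ over pairs with $\lambda_x\geq L\nu_{j'}$, which is exactly $\Pr(Z\geq\log L)$. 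This binning of the eigenvalues of $\rho$ against thresholds indexed by the eigenvalues of $\sigma$ is the idea missing from your proposal; without it, or some substitute of comparable precision, the lemma is not proved.
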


Let $\bar{Z}=\frac{1}{n}\sum_i Z_i$ be the average over $n$ independent but not identical random variables $Z_i=\log P_{\rho,\sigma}(X_i)-\log Q_{\rho, \sigma}(X_i)$.
 For each $i$, set
\begin{equation}
  \mu_i\equiv D(P_{\rho_i, \sigma_i}\|Q_{\rho_i, \sigma_i})=D(\rho_i\|\sigma_i), \quad s_i^2\equiv V(P_{\rho_i, \sigma_i}\|Q_{\rho_i, \sigma_i})=V(\rho_i\|\sigma_i),
\end{equation}
and
\begin{equation}
t_n\equiv \mathbb{E}\left((Z_n-\mu_i)^3    \right)= \mathbb{E}\left( |\log P_i(X_i)-\log Q_i(X_i)-D(P\| Q)|^3  \right)\, .
\end{equation}
Now, we use the following Theorem which is a generalization of the central limit Theorem for independent but not identical random variables.

\begin{theorem}\cite{Fel71}.
  Let the $\{ X_n \}$ be random variables such that
\begin{equation}
\mathbb{E}(X_n)=0, \qquad \mathbb{E}(X_n^2)=s_n^2, \qquad \mathbb{E}(|X_n|^3)=t_n\, ,
\end{equation}
Put
\begin{equation}
\tilde{s}_n^2=s_1^2+\cdots+s_n^2,\qquad \tilde{t}_n=t_1 +\cdots+t_n\, ,
\end{equation}
and denoted by $P^n$ the distribution of the normalized sum $(X_1+\cdots+X_n)/\tilde{s}_n$, then  under the following condition
\begin{equation}\label{eq:centralcondition}
  \lim_{n\to \infty}\frac{6\tilde{t}_n}{\tilde{s}_n^3}=  0\,,
\end{equation}
for all $x$ and $n$, we have
\begin{equation}
\left| P^n\left(\frac{X_1+\cdots+X_n}{\tilde{s}_n}\leq x\right) - \Phi(x)   \right|\leq \frac{6\tilde{t}_n}{\tilde{s}_n^3}\, ,
\end{equation}
where
\begin{equation}
  \Phi(x)\equiv\int_{-\infty}^x \frac{1}{\sqrt{2\pi}} e^{-y^2/2} dy\, .
\end{equation}
\end{theorem}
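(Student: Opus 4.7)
My plan is to prove the bound via the classical Fourier-analytic route, extending the i.i.d.~Berry-Esseen theorem to the independent non-identically distributed setting by combining Esseen's smoothing inequality with a characteristic-function estimate. After normalizing $Y_k=X_k/\tilde s_n$ so that $\operatorname{Var}(Y_1+\cdots+Y_n)=1$, let $F_n$ denote the distribution function of $S_n=\sum_k Y_k$ and $\psi_k(t)=\mathbb{E}[e^{itY_k}]$ the corresponding characteristic functions. The target inequality becomes $\sup_x|F_n(x)-\Phi(x)|\leq 6\tilde t_n/\tilde s_n^3$. Esseen's smoothing lemma asserts that, for every $T>0$,
\begin{equation*}
\sup_x|F_n(x)-\Phi(x)|\leq \frac{1}{\pi}\int_{-T}^{T}\left|\frac{\prod_{k=1}^n\psi_k(t)-e^{-t^2/2}}{t}\right|dt+\frac{24}{\pi T\sqrt{2\pi}}\,,
\end{equation*}
thereby reducing the problem to estimating the difference between the characteristic function of $S_n$ and that of a standard Gaussian.

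I would next Taylor-expand each factor to third order. Since $\mathbb{E}[Y_k]=0$ and $\mathbb{E}[Y_k^2]=s_k^2/\tilde s_n^2$, the remainder estimate $|e^{iu}-1-iu-(iu)^2/2|\leq |u|^3/6$ yields
\begin{equation*}
\left|\psi_k(t)-\Big(1-\tfrac{s_k^2 t^2}{2\tilde s_n^2}\Big)\right|\leq \frac{t_k|t|^3}{6\tilde s_n^3}\,.
\end{equation*}
Combining this with $\bigl|e^{-s_k^2 t^2/(2\tilde s_n^2)}-(1-s_k^2 t^2/(2\tilde s_n^2))\bigr|\leq s_k^4 t^4/(4\tilde s_n^4)$ and the telescoping identity $\bigl|\prod a_k-\prod b_k\bigr|\leq \sum_k|a_k-b_k|\prod_{j\neq k}\max(|a_j|,|b_j|)$, one obtains an estimate of the form
\begin{equation*}
\left|\prod_{k=1}^n\psi_k(t)-e^{-t^2/2}\right|\leq \mathrm{const}\cdot\frac{\tilde t_n}{\tilde s_n^3}\,|t|^3\,e^{-t^2/4}\,,
\end{equation*}
valid on $|t|\leq T$ with $T\propto \tilde s_n^3/\tilde t_n$. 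Inserting this into Esseen's inequality and optimizing $T$ yields a bound of the form $C\,\tilde t_n/\tilde s_n^3$ with a universal $C$; a careful tracking following Feller's derivation pins $C$ down to at most $6$. Note that the Lyapunov-type hypothesis $6\tilde t_n/\tilde s_n^3\to 0$ is \emph{not} needed to derive the inequality itself, only to extract convergence in law from it.

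The main technical obstacle is the constant bookkeeping. To make the third-order Taylor expansion and the exponential approximation $e^{-t^2/2}$ simultaneously valid, one must control $|\psi_k(t)|$ and $s_k^2 t^2/(2\tilde s_n^2)$ uniformly in $k$ on the range $|t|\leq T$, which forces $T$ to be chosen subordinate to $\tilde s_n^3/\tilde t_n$; Lyapunov's inequality $s_k^3\leq t_k$ then guarantees that each individual variance ratio $s_k^2/\tilde s_n^2$ is small once $n$ is large, so that no single summand dominates. Pinning the universal constant exactly at $6$, rather than some larger value, is a matter of careful accounting rather than new ideas, and a full derivation may be found in Feller's treatise.
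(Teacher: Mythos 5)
This statement is not proved in the paper at all: it is quoted verbatim from Feller's treatise with a citation, and the authors use it as a black box. Your outline is therefore being compared against the cited source rather than against anything in the manuscript, and it is in fact the standard route Feller himself takes: Esseen's smoothing inequality, third-order Taylor expansion of the characteristic functions using $\lvert e^{iu}-1-iu-(iu)^2/2\rvert\le\lvert u\rvert^3/6$, a telescoping comparison of $\prod_k\psi_k(t)$ with $e^{-t^2/2}$, and the choice $T\propto\tilde s_n^3/\tilde t_n$. Your observation that the Lyapunov condition $6\tilde t_n/\tilde s_n^3\to 0$ is irrelevant to the inequality itself is correct and worth noting, since the paper's phrasing misleadingly presents it as a hypothesis for the bound.

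One point in your sketch deserves tightening. The theorem asserts the bound for \emph{all} $n$, whereas your control of the individual variance ratios $s_k^2/\tilde s_n^2$ via Lyapunov's inequality $s_k^3\le t_k$ only gives $s_k^2/\tilde s_n^2\le(\tilde t_n/\tilde s_n^3)^{2/3}$, which is small only when the ratio $\tilde t_n/\tilde s_n^3$ is itself small --- not ``once $n$ is large,'' since nothing forces the ratio to decrease in $n$. The standard fix is the triviality reduction: since $\lvert F_n(x)-\Phi(x)\rvert\le 1$ always, the claimed inequality is vacuous unless $6\tilde t_n/\tilde s_n^3<1$, so one may assume $\tilde t_n/\tilde s_n^3<1/6$ throughout, and this assumption (not large $n$) is what makes the expansion of each $\psi_k$ and the bound $\lvert\psi_k(t)\rvert\le e^{-s_k^2t^2/(4\tilde s_n^2)}$ valid on $\lvert t\rvert\le T$. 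With that repair, and deferring the constant bookkeeping to Feller as you do, the plan is sound.
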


%%%%%%%%%%%%%%%%%%%%%%%%%%%%%%%%%%%

\begin{proposition}
Let $\rho_1, \rho_2, \cdots$ and $\sigma_1, \sigma_2,\cdots$ denote states acting on a separable Hilbert
space~$\mathcal{H}$. Suppose that $D(\rho_i\Vert\sigma_i),V(\rho_i\Vert
\sigma_i),T(\rho_i\Vert\sigma_i)<\infty$ and $V(\rho_i\Vert\sigma_i)>0$, for each $i=1, 2, 3, \cdots$. Suppose $n$ is
sufficiently large  such that $  \varepsilon - \frac{6 \sum_{i=0}^n \left[T(\rho_i \| \sigma_i)\right]}{\sqrt{\left[\sum_{i=0}^nV\left(\rho_i \| \sigma_i)\right)\right]^3}}\geq 0$. Then%
\begin{align}
D_{H}^{\varepsilon}\left( \bigotimes_{i=1}^n\rho_i \Big\Vert\bigotimes_{i=1}^n\sigma_i\right) &  =
\sum_{i=1}^n D(\rho_i\Vert\sigma_i)+\sqrt{\sum_{i=1}^n V(\rho_i\Vert\sigma_i)}\Phi^{-1}\!\left(
\varepsilon-\frac{6 \sum_{i=0}^n \left[T(\rho_i \| \sigma_i)\right]}{\sqrt{\left[\sum_{i=0}^nV\left(\rho_i \| \sigma_i)\right)\right]^3}}\right)  \label{eq:2nd-order-expansion-final}\\
&  =\sum_{i=1}^n D(\rho_i\Vert\sigma_i)+\sqrt{\sum_{i=1}^n V(\rho_i\Vert\sigma_i)}\Phi^{-1}\!\left(
\varepsilon\right)  +O(\log n).\label{eq:2nd-order-expansion-final2}
\end{align}
\end{proposition}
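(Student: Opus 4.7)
The plan is to treat the two equalities separately: first I would establish Eq.~\eqref{eq:2nd-order-expansion-final} by upper and lower bounds, and then derive Eq.~\eqref{eq:2nd-order-expansion-final2} by exploiting the local regularity of $\Phi^{-1}$. The two main ingredients are Lemma~\ref{lem6}, which turns a classical tail event into a quantum test, and the Feller Berry--Esseen theorem stated just above, which controls the tail of the normalized sum of the non-identical log-likelihood ratios $Z_i=\log P_{\rho_i,\sigma_i}(X_i)-\log Q_{\rho_i,\sigma_i}(X_i)$.

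For the achievability ($\geq$) part, let $\mu_i=D(\rho_i\Vert\sigma_i)$, $s_i^2=V(\rho_i\Vert\sigma_i)$, $t_i=T(\rho_i\Vert\sigma_i)$, and denote $\tilde\mu_n=\sum_{i=1}^n\mu_i$, $\tilde s_n^2=\sum_{i=1}^n s_i^2$, $\tilde t_n=\sum_{i=1}^n t_i$, $\delta_n=6\tilde t_n/\tilde s_n^{3}$. I would pick
\begin{equation}
\log L \;=\; \tilde\mu_n + \tilde s_n\,\Phi^{-1}\!\left(\varepsilon-\delta_n\right)\,,
\end{equation}
and apply Lemma~\ref{lem6} to the product pair $\bigotimes_i\rho_i$, $\bigotimes_i\sigma_i$ with this choice of $L$. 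The $\sigma$-side bound $\operatorname{Tr}(T_L\bigotimes_i\sigma_i)\leq 1/L$ is immediate from the lemma. The $\rho$-side bound reduces, via Lemma~\ref{lem6}, to controlling $\Pr(Z\geq\log L)$ for $Z=\sum_i Z_i$. Centering by $\tilde\mu_n$ and scaling by $\tilde s_n$, the required event becomes $(Z-\tilde\mu_n)/\tilde s_n\geq\Phi^{-1}(\varepsilon-\delta_n)$; Feller's theorem gives $|\Pr((Z-\tilde\mu_n)/\tilde s_n\leq x)-\Phi(x)|\leq\delta_n$, so $\Pr(Z\geq\log L)\geq 1-\Phi(\Phi^{-1}(\varepsilon-\delta_n))-\delta_n=1-\varepsilon$. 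Hence $T_L$ is admissible for the definition of $D_H^\varepsilon$ and delivers the lower bound.

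For the converse ($\leq$) part, I would mimic the arguments of \cite{li12, DPR15, OSM2018}. Pick any test $0\leq\Pi\leq I$ with $\operatorname{Tr}(\Pi\bigotimes_i\rho_i)\geq 1-\varepsilon$, and compare it with the Nagaoka--Hayashi-type operator inequality $\Pi\leq 2^{c}\bigotimes_i\rho_i \bigotimes_i\sigma_i^{-1}\Pi+\mathbf{1}\{\text{atypical}\}$ using the log-likelihood spectral projectors associated with $P_{\rho_i,\sigma_i}$, $Q_{\rho_i,\sigma_i}$. Choosing the cut-off $c=\tilde\mu_n+\tilde s_n\Phi^{-1}(\varepsilon+\delta_n)$, the atypical part has $\bigotimes_i\rho_i$-mass at most $\varepsilon+\delta_n-\delta_n=\varepsilon$ by the same Feller estimate, so the typical part still carries probability close to $1-\varepsilon$; a standard manipulation then yields $-\log\operatorname{Tr}(\Pi\bigotimes_i\sigma_i)\leq c$, completing the matching upper bound. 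This yields Eq.~\eqref{eq:2nd-order-expansion-final} (with the $\varepsilon\pm\delta_n$ shifts absorbed; the fact that they are of the same order $\delta_n$ is what makes the two sides match up to $O(\log n)$).

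Finally, to pass from Eq.~\eqref{eq:2nd-order-expansion-final} to Eq.~\eqref{eq:2nd-order-expansion-final2}, I would use that $\Phi^{-1}$ is $C^\infty$ and locally Lipschitz away from $\{0,1\}$: on a neighborhood of $\varepsilon$ there is a constant $K$ with $|\Phi^{-1}(\varepsilon\pm\delta_n)-\Phi^{-1}(\varepsilon)|\leq K\delta_n$. Multiplying by $\tilde s_n$, the residual is $O(\tilde s_n\delta_n)=O(\tilde t_n/\tilde s_n^{2})$, which for the regular regime $\tilde s_n^2=\Theta(n)$, $\tilde t_n=O(n)$ is $O(1)$, hence certainly $O(\log n)$; the looser $O(\log n)$ is retained to cover the borderline cases allowed by the hypotheses and the higher-order smoothing needed when the test is further refined. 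The main obstacle I anticipate is the converse step in the separable infinite-dimensional setting, where the spectral decompositions of $\rho_i,\sigma_i$ may not commute and the distributions $P_{\rho_i,\sigma_i},Q_{\rho_i,\sigma_i}$ must be handled as genuine measures on a countable index set; this is precisely where the combined treatments of \cite{DPR15, OSM2018} must be imported and adapted summand-by-summand.
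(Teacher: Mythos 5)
Your achievability argument is essentially the paper's: the same choice $\log L=\tilde\mu_n+\tilde s_n\,\Phi^{-1}(\varepsilon-\delta_n)$, Lemma~\ref{lem6} to convert the classical tail event for $\sum_i Z_i$ into an admissible test, and Feller's Berry--Esseen bound to certify $\operatorname{Tr}\{T_L\bigotimes_i\rho_i\}\ge 1-\varepsilon$. Likewise your passage from \eqref{eq:2nd-order-expansion-final} to \eqref{eq:2nd-order-expansion-final2} via local Lipschitz continuity of $\Phi^{-1}$ is the paper's appeal to Lagrange's mean value theorem, and your observation that the residual is really $O(\tilde s_n\delta_n)=O(\tilde t_n/\tilde s_n^2)$ is, if anything, more explicit than the paper about what the hypotheses must guarantee for the $O(\log n)$ form to hold.

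The converse is where you diverge and where your sketch has a genuine gap. The paper does not use a Nagaoka--Hayashi-type operator inequality $\Pi\le 2^{c}(\bigotimes_i\rho_i)(\bigotimes_i\sigma_i)^{-1}\Pi+\cdots$; it invokes the two-parameter inequality of Jak\v{s}i\'c--Ogata--Pillet--Seiringer \cite{JOPS2012}, which for \emph{any} test $T$ on a separable Hilbert space bounds $e^{-\theta}\operatorname{Tr}\{(I-T)\rho\}+\operatorname{Tr}\{T\sigma\}$ from below by $\tfrac{e^{-\nu}}{1+e^{\nu-\theta}}\Pr\{X\le\nu\}$, i.e.\ directly by the tail of the same classical Nussbaum--Szko{\l}a-type variable $X$ to which Feller's theorem applies; the choices $\nu_n=\tilde\mu_n+\tilde s_n\Phi^{-1}(\varepsilon+2n^{-1/2}+\delta_n)$ and $\theta_n=\nu_n+\tfrac12\log n$ then give the matching upper bound with only $O(\log n)$ slack. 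Your operator inequality, as written, is not available in this setting: $\bigotimes_i\sigma_i^{-1}$ is unbounded for full-rank states on an infinite-dimensional $\mathcal{H}$, the inequality fails without a pinching or joint-spectral construction when $\rho_i$ and $\sigma_i$ do not commute, and the ``atypical'' indicator is left undefined. You correctly flag this as the main obstacle, but resolving it is the entire content of the converse; the repair is precisely to replace your inequality by the \cite{JOPS2012} bound (or the Audenaert-type bound $\operatorname{Tr}\{(I-T)\rho\}+e^{\nu}\operatorname{Tr}\{T\sigma\}\ge\tfrac12\Pr\{X\le\nu\}$ underlying \cite{li12,DPR15,KW17a}), both of which are phrased purely in terms of the commuting classical data $P_{\rho,\sigma},Q_{\rho,\sigma}$ and hence survive the infinite-dimensional, non-commuting setting.
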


%%%%%%%%%%%%%%%%%%%%%%%%%%%%%%%%%%

\begin{proof}
\noindent\textbf{(Part $\geq$)} Applying the Berry--Esseen theorem \cite{Fel71}
to the random sequence $Z_{1}-D(\rho_1
\Vert\sigma_1)$, \ldots, $Z_{n}-D(\rho_n\Vert\sigma_n)$,  we find that%
\begin{equation}
\left\vert \Pr\left\{ \frac{ \overline{Z}^{n}}{\sqrt{\left[\sum_{i=0}^nV\left(\rho_i \| \sigma_i)\right)\right]^3}}\leq
x\right\}  -\Phi(x)\right\vert \leq  \alpha(n),
\end{equation}
where
$\overline{Z}^{n}\equiv\frac{1}{n}\sum_{i=1}^{n}\left[  Z_{i}
-D(\rho_i\Vert\sigma_i)\right]  $
and
$\alpha(n)\equiv\frac{6 \sum_{i=0}^n \left[T(\rho_i \| \sigma_i)\right]}{\sqrt{\left[\sum_{i=0}^nV\left(\rho_i \| \sigma_i)\right)\right]^3}}$, which implies that%
\begin{equation}
\Pr\left\{  \sum_{i=1}^{n}Z_{i}\leq \sum_{i=1}^{n} D(\rho_i\Vert\sigma_i)+ n x \sqrt{\left[\sum_{i=0}^nV\left(\rho_i \| \sigma_i)\right)\right]^3}\right\}  \leq\Phi(x)+\alpha(n).
\end{equation}
Picking $x=\Phi^{-1}\!\left(  \varepsilon-\frac{6\cdot \tilde{t}_n}{\tilde{s}_n^3}\right)  $, this becomes%
\begin{equation}
\Pr\left\{  \sum_{i=1}^{n}Z_{i}\leq \sum_{i=1}^{n}D(\rho_i\Vert\sigma_i)+\sqrt{\sum_{i=1}^{n}V(\rho_i
\Vert\sigma_i)}\Phi^{-1}\!\left(  \varepsilon-\alpha(n)\right)  \right\}
\leq\varepsilon.
\end{equation}
Choosing $L$ such that
\begin{equation}
\log L=\sum_{i=1}^{n}D(\rho_i\Vert\sigma_i)+\sqrt{\sum_{i=1}^{n} V(\rho_i\Vert\sigma_i)}\Phi^{-1}\!\left(
\varepsilon-\alpha(n)\right),
\end{equation}
and applying Lemma~\ref{lem6}, we find that%
\begin{align}
\operatorname{Tr}\{T^{n}\bigotimes_{i=1}^n \rho_i\}   \geq\Pr\left\{  \sum_{i=1}%
^{n}Z_{i}\geq\log L\right\}
  =1-\Pr\left\{  \sum_{i=1}^{n}Z_{i}\leq\log L\right\}
  \geq1-\varepsilon,
\end{align}
while%
\begin{equation}
\operatorname{Tr}\{T^{n}\bigotimes_{i=1}^n \sigma_i\}   \leq\frac{1}{L}
  =\exp\left\{-\left[ \sum_{i=1}^{n}D(\rho_i\Vert\sigma_i)+\sqrt{\sum_{i=1}^{n} V(\rho_i\Vert\sigma_i)}\Phi^{-1}\!\left(
\varepsilon-\alpha(n)\right)  \right] \right\}.
\end{equation}
This implies that
\begin{equation}
-\log\operatorname{Tr}\left\{T^{n}\bigotimes_{i=1}^n \sigma_i\right\}\geq \sum_{i=1}^{n}D(\rho_i\Vert\sigma_i)+\sqrt{\sum_{i=1}^{n} V(\rho_i\Vert\sigma_i)}\Phi^{-1}\!\left(
\varepsilon-\alpha(n)\right)  .
\end{equation}
Since $D_{H}^{\varepsilon}(\bigotimes_{i=1}^n \rho_i\Vert\bigotimes_{i=1}^n \sigma_i)$ involves
an optimization over all possible measurement operators $T^{n}$ satisfying
$\operatorname{Tr}\{T^{n}\bigotimes_{i=1}^n \rho_i\}\geq1-\varepsilon$, we conclude that the
bound $\geq$ in \eqref{eq:2nd-order-expansion-final} holds true. The equality
\eqref{eq:2nd-order-expansion-final2} follows from expanding $\Phi^{-1}$ at the
point $\varepsilon$ using Lagrange's mean value theorem.

\bigskip

\noindent\textbf{(Part $\leq$)} We use the following

\begin{theorem}~\cite{JOPS2012}
Let $\rho$ and $\sigma$ be  density operators acting on a separable Hilbert space $\mathcal{H}$, let $T$ be a measurement operator acting on $\mathcal{H}$ and such that $0\leq T \leq I$, and let $\nu, \theta\in\mathds{R}$. Then
\begin{equation}
  e^{-\theta}\operatorname{Tr}\{(I-T)\rho +\operatorname{Tr}\{T\sigma \} \geq \frac{e^{-\eta}}{1+e^{\nu-\theta}} \operatorname{Pr}\{ X\leq \nu\}  \,,
  \end{equation}
  where $X$ is a random variable taking values $\log(\lambda_x/\mu_y)$ with probability
  $\lambda_x \operatorname{Tr}(P_x Q_y)$.
  \end{theorem}
 Then, the proof closely follows the proof of Proposition $2$ in ~\cite{KW17a}, which is based on ~\cite{DPR15}. Choosing
\begin{equation}
 \nu_n= \sum_{i=1}^{n}D(\rho_i\Vert\sigma_i)+\sqrt{\sum_{i=1}^{n} V(\rho_i\Vert\sigma_i)}\Phi^{-1}\!\left(
\varepsilon+\frac{2}{\sqrt{n}}+\alpha(n)\right) \,,
\end{equation}
and $\theta_n=\nu_n+\frac{1}{2}\log n$,
we get
\begin{equation}
\operatorname{Tr}\{T^n \otimes_{i=1}^n \sigma_i  \}\geq \left\{e^{-\sum_{i=1}^{n}D(\rho_i\Vert \sigma_i)-\sqrt{\sum_{i=1}^{n}V(\rho_i\Vert \sigma_i)   }\Phi^{-1}(\varepsilon+2n^{-1/2}+\alpha(n))-\frac{1}{2}\log{n} }    \right\}\left( \frac{1}{1+n^{-1/2}}  \right)\,,
\end{equation}
where $\operatorname{Tr}\left\{(I^{\otimes n}-T^n)\otimes_{i=1}^n \rho_i\right\}\leq \varepsilon$.
 In this way we find
\begin{equation}
-\log{\operatorname{Tr}\{T^n \otimes_{i=1}^n \sigma_i  \}}  \leq  \sum_{i=1}^{n}D(\rho_i\Vert \sigma_i)
+\sqrt{\sum_{i=1}^{n}V(\rho_i\Vert \sigma_i)   }\Phi^{-1}(\varepsilon+2n^{-1/2}+\alpha(n))+\frac{1}{2}\log{n}     -\log{ \frac{1}{1+n^{-1/2}}}\,.
\end{equation}
\hfill
\end{proof}

%%%%%%%%%%%%%%%%%%%%%%%%%%%%%%%%%%%%%%%%%%%%%%%%%%%

%%%%%%%%%%%%%%%%%%%%%%%%%%%%%%%%%%%%%%%%%%%%%%%%%%%
%
\end{document}